\newtheorem{thm}{Theorem}
\newtheorem{question}{Question}
\title{Fever: Optimal Responsive View Synchronisation} 
\author{Andrew Lewis-Pye}{London School of Economics \and \url{http://www.lewis-pye.com} }{a.lewis7@lse.ac.uk}{}{}
\author{Ittai Abraham}{Intel}{ittai.abraham@intel.com}{[]}{[]}
\authorrunning{Andrew Lewis-Pye and Ittai Abraham} 
\keywords{Distributed Systems, State Machine Replication} 
\begin{document}

\maketitle

\begin{abstract}
  \emph{View synchronisation} is an important component of many modern Byzantine Fault Tolerant State Machine Replication (SMR) systems in the partial synchrony model. Roughly, the efficiency of view synchronisation is measured as the word complexity and latency required for moving from being synchronised in a view of one correct leader to being synchronised in the view of the next correct  leader.
    The efficiency of view synchronisation has emerged as a major bottleneck in the efficiency of SMR systems as a whole. A key question remained open: Do there exist view synchronisation protocols with asymptotically optimal quadratic worst-case word complexity that also obtain linear complexity and responsiveness when moving between consecutive correct leaders? 

      We answer this question affirmatively with a new view synchronisation protocol for partial synchrony assuming partial initial clock synchronisation, called \emph{Fever}.  If $n$ is the number of processors and $t$ is the largest integer $<n/3$, then Fever has resilience $t$, and in all executions with at most $0\leq f\leq t$ Byzantine parties and network delays of at most $\delta \leq \Delta$ after $GST$ (where $f$ and $\delta$ are unknown), Fever has worst-case word complexity $O(fn+n)$ and worst-case latency $O(\Delta f + \delta)$.
\end{abstract}

\section{Introduction}
Recent years have seen interest in developing protocols for State Machine Replication (SMR) that work efficiently at scale \cite{cohen2021byzantine}. In concrete terms, this means looking to minimise the latency and the word complexity per consensus decision as a function of the number of participants $n$. Most commonly, this analysis takes place in the partial synchrony communication model, first suggested by Dwork, Lynch, and Stockmeyer \cite{DLS88}. The partial synchrony model forces the adversary to choose a point in time called the Global Stabilisation Time $(GST)$ such that any message sent at time $\mathtt{t}$ must arrive by time $\max\{GST,\mathtt{t}\} + \Delta$. While $\Delta$ is known, the value of $GST$ is unknown to the protocol. This model forms a practical compromise between the synchronous model (where all message delays are bounded by $\Delta$), which is too optimistic, and the asynchronous model (where message delays are finite but unbounded), which is too pessimistic. 

In a recent line of works \cite{HSv1, L22, opodis22} it has been shown that SMR can be solved with optimal resilience and with worst-case word complexity $O(n^2)$ after $GST$. Here, optimal resilience means being able to handle up to $t$  Byzantine faults \cite{DLS88}, where $t$ is the greatest integer less than $n/3$. Given the lower bound of $\Omega(n^2)$ by Dolev and Reischuk \cite{dolev1985bounds}, this bound on word complexity is tight. 

\vspace{0.2cm} 
\noindent \textbf{The optimistic case}. In practical settings, however, one typically cares not only about the worst-case, but also about the complexity and latency in the optimistic case when the actual (and unknown) number of failures $f$ is less than the given bound $t$. Indeed, this is one of the principal motivations for considering the partial synchrony model. In the asynchronous model, where randomness is required after the initial cryptographic setup \cite{fischer1985impossibility}, one can already achieve word complexity which is \emph{expected} $O(n^2)$ per consensus decision \cite{VABA19}. In the partially synchrony model, the hope is that one may be able to define protocols which have worst-case complexity (providing cryptographic assumptions hold) which is $O(fn+n)$. Ideally, such protocols should also be \emph{optimistically responsive}. Roughly, this means that the protocol should function at network speed if it turns out that $f=0$: if $f=0$, the protocol should be live during periods when message delay is less than the given bound $\Delta$, but latency should be a function of the actual (unknown) message delay $\delta$. This is important because the actual message delay $\delta$ may be much smaller than $\Delta$ when the latter value is conservatively set so as ensure liveness under a wide range of network conditions. More formally, we can say that a protocol is optimistically responsive if the latency after $GST$ is $O(\Delta f +\delta)$ -- a precise definition will be given in Section \ref{setup}.  An important feature of this definition is that latency is measured by the time until the first consensus decision \emph{after} GST. Since GST is unknown, the definition applies to multi-shot protocols and cannot be satisfied by single-shot protocols. In particular, being optimistically responsive forces successive correct leaders after GST to complete successive consensus decisions at network speed. 
 
Existing protocols for the partial synchrony model that give optimal resilience and worst-case complexity $O(n^2)$ do not satisfy the desired latency and complexity bounds described above. For such protocols \cite{L22, opodis22}, the worst-case complexity is $O(n^2)$ but not $O(fn+n)$, while latency is $O(n\Delta)$.   

\vspace{0.2cm} 
\noindent \textbf{The bottleneck is view synchronisation}.
 Protocols for Byzantine Agreement and SMR typically divide the instructions into \emph{views}, each with a dedicated leader that coordinates the protocol execution during that view. Since Hotstuff \cite{yin2019hotstuff} shows how to achieve linear complexity within views, the remaining task is to define an efficient protocol that coordinates processors to execute instructions for the same view at the same time as each other. Accordingly, the task of defining efficient protocols for view synchronisation has become a principal focus \cite{Cogsworth21,NK20,L22, opodis22,disc22}, e.g.\ the protocols mentioned above, that achieve worst-case complexity $O(n^2)$ for Byzantine Agreement in the partial synchrony model, achieve this task by defining an appropriate method of view synchronisation.

\vspace{0.2cm} 
\noindent \textbf{Clock assumptions}. 
Dwork, Lynch, and Stockmeyer \cite{DLS88} study several variants of partial synchrony.  In the synchronous communication model, there is a known bound  $\Delta$ on message delay.  In the partially synchronous communication model, the known bound $\Delta$ on message delay only holds after the unknown time GST. Similarly, in the \textit{synchronous processors} model, there is a known bound $\Phi$ on clock drift between correct processors. In the \textit{partially synchronous processors} model,  there is a known bound  $\Phi$ on clock drift between correct processors after GST. Moreover, in the \textit{completely synchronous processors model}, the clocks of all correct processors start at time 0 and there is no clock drift.  

The setting we consider in this paper is the partially synchronous communication model with completely synchronous processors (as studied in Section 4 of \cite{DLS88}). In fact, our model allows for a slight relaxation and some clock drift (see below).\footnote{As explained later, to obtain the results described here it suffices to assume the (potentially realistic) condition that there is some known bound on the different times at which correct processors begin the protocol execution and some known bound on clock drift for correct processors during periods of asynchrony.} Our results do not hold if arbitrary clock drift can occur before GST, but, as described later,  follow-up work already shows that our techniques can be used to improve state-of-the-art results under the assumption that arbitrary clock drift \emph{can} occur prior to  GST. 

From a practical perspective, we believe our result applies to a model that is realistic (at least in some scenarios of interest) and provides a  compelling tradeoff: We show that, by using today's highly reliable hardware clocks, one can achieve view synchronisation with optimal communication complexity and latency, and for both the worst case and optimistic cases. 

From a theoretical perspective, obtaining the positive results of this paper without strong clock synchronisation assumptions remains a challenging open question. We believe the fact that this has not been obtained to date, despite multiple publications and the centrality of this problem, may indicate that there is a natural barrier, and that in fact some type of clock synchronisation is required to obtain our results.

Our formal requirement on clock synchronisation, referred to as \emph{partial initial clock synchronisation},  is defined in Section \ref{setup} and is, in fact, a relaxation of the partially synchronous communication model with completely synchronous processors.  Using the assumption of partial initial clock synchronisation, we are able to define an innovative view synchronisation protocol in which the correct processors send at most $2n$ messages (combined) per view, and which is efficient in both the worst and optimistic cases.

\vspace{0.2cm} 
\noindent \textbf{The result}.
All terms in Theorem \ref{t1} will be formally defined in Section \ref{setup}. Roughly, the worst-case word complexity of a view synchronisation protocol is the maximum number of words (each of maximum length determined by a security parameter) that need to be sent by correct processors during synchrony to synchronise all correct processors on a view with a correct leader. Similarly, the worst-case latency is the maximum time one has to wait during synchrony before all correct processors synchronise on a view with correct leader.

\begin{thm} \label{t1} 
Consider the partial synchrony model with maximum delay $\Delta$ after GST and with partial initial  clock synchronisation. If $t$ is the largest integer less than $n/3$, there exists a view synchronisation protocol with resilience $t$, such that for all executions with at most $0\leq f\leq t$ Byzantine parties and network delays of at most $\delta \leq \Delta$ after $GST$ (where $f$ and $\delta$ are unknown):
\begin{enumerate}
    \item The worst-case word complexity is $O(fn+n)$;
    \item The worst-case latency is $O(\Delta f + \delta)$.
\end{enumerate}

In particular, for $f=0$ this means $O(n)$ complexity and $O(\delta)$ latency and for $f=t$ this is $O(n^2)$ complexity and $O(\Delta n)$ latency.
\end{thm}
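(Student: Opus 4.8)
The plan is to prove Theorem~\ref{t1} constructively: exhibit the Fever protocol and then establish its synchronisation, word-complexity, and latency guarantees in turn. The protocol combines two complementary mechanisms. The first is a \emph{clock-driven slow path}: using partial initial clock synchronisation, each view $v$ is assigned an \emph{earliest entry time} $e_v$ on a correct processor's local clock, and a processor that has been in view $v$ too long without progress advances to $v+1$; since correct clocks agree up to a small bounded skew (this must be derived from the model assumptions, and after GST is a constant in the completely-synchronous-processors specialisation), this alone keeps all correct processors within $O(1)$ views of one another, guarantees liveness with per-view latency $\Theta(\Delta)$, and costs only $O(n)$ words per view. The second is a \emph{message-driven fast path}: when the leader of view $v$ is correct and the network is fast, a quorum certificate for view $v$ forms in time $O(\delta)$, and this certificate also serves as a \emph{view certificate} authorising entry to $v+1$, so consecutive correct leaders under a fast network drive consensus at network speed. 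The delicate ingredient is the \emph{catch-up rule}: a processor advancing to view $v+1$ forwards the relevant (view or timeout) certificate only to the leader of $v+1$, who aggregates incoming certificates via threshold signatures -- each being a single word -- and broadcasts the result; this is what bounds communication to $O(n)$ per view while still dragging lagging correct processors forward, with the clock-driven path as a backstop whenever the relay leader is itself faulty.

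First I would extract from partial initial clock synchronisation a clean bound $\epsilon$ on the post-GST clock skew of correct processors, and fix the spacing of the $e_v$ so that a view with a correct leader, once jointly occupied by all correct processors, stays jointly occupied long enough to complete a consensus decision. Next I would prove the core \emph{view synchronisation lemma} in three steps: (i) no correct processor skips a view, because entry to $v+1$ requires either a view certificate or the local clock timeout, and both are visible to -- or promptly reproducible by -- every correct processor; (ii) a timeout certificate for view $v$ can only be assembled with the genuine cooperation of a correct processor wishing to leave $v$, which follows from $n>3t$ by the standard quorum-intersection argument and rules out a faulty leader equivocating correct processors into premature exits; and (iii) combining (i) and (ii) with the skew bound yields a view with a correct leader, arising within the claimed time after GST, that is simultaneously occupied by all correct processors for the required duration.

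I would then do the accounting. For word complexity, classify the views between one correct leader and the next: a settled view with a correct leader costs $O(n)$ words (one leader broadcast, $n$ votes, $O(n)$ certificate-forwarding messages); a view with a faulty leader costs $O(n)$ words for timeout messages relayed to the next leader, plus a further $O(n)$ if the next leader is also faulty and the clock-driven path must fire and relay onward. Since at most $f$ faulty leaders, hence at most $f$ faulty views, lie between two consecutive correct leaders, and the terminal correct view is $O(n)$, the total is $O(fn+n)$ -- with some care needed to confine residual post-GST desynchronisation to an additive $O(n)$. For latency, charge $O(\delta)$ to each fast-path view and $O(\Delta)$ to each clock-driven view; re-synchronisation after GST is driven at network speed by correct relay leaders, and only the $O(1)$ views bracketing the target correct view must be traversed on the slow path, so the bound is $O(\Delta f + \delta)$, collapsing to $O(\delta)$ when $f=0$.

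\textbf{Main obstacle.} I expect the crux to be getting all three guarantees out of the single catch-up mechanism at once: relaying certificates only to the next leader is exactly what keeps communication at $O(n)$ per view, but it is also what threatens to strand lagging correct processors when relay leaders misbehave, so the proof must show that the clock-driven backstop always repairs such lags within one timeout interval and without a communication blow-up, and that when $f=0$ no such repair -- and no clock timeout -- ever sits on the critical path, so latency is genuinely $O(\delta)$ and not merely $O(\Delta)$. The clock-skew bookkeeping -- spacing the $e_v$ wide enough for joint occupancy yet tight enough not to inflate latency -- is the fiddly-but-routine part underpinning everything.
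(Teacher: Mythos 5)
There is a genuine gap, and it stems from a misreading of what partial initial clock synchronisation actually gives you. You propose to ``extract a clean bound $\epsilon$ on the post-GST clock skew of correct processors'' and then argue that this ``alone keeps all correct processors within $O(1)$ views of one another.'' But the condition $(\dagger_{\Gamma,0})$ does \emph{not} bound the spread of correct clocks: it only says that each correct clock has at least $t$ other correct clocks no more than $\Gamma$ \emph{behind} it. There is no upper bound on how far ahead the most advanced correct processor can be, and the paper stresses precisely this point. Worse, once the protocol forwards clocks on seeing certificates, a Byzantine leader can (before GST) selectively deliver a VC and push some correct clocks arbitrarily far ahead of others, so even if clocks started perfectly synchronised, correct processors can end up arbitrarily many views apart. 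Your step (iii), which combines a skew bound with (i) and (ii) to obtain joint occupancy of a correct leader's view, therefore has no valid starting point in this model.

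The second gap is the certificate threshold. You invoke a timeout-certificate mechanism justified by ``the standard quorum-intersection argument'' from $n>3t$, which implicitly needs a $2t+1$ (or $n-t$) quorum. But the model only guarantees $t+1$ correct processors within $\Gamma$ of the leading edge; up to $t$ correct processors may be lagging arbitrarily, so you may never assemble a $2t+1$-threshold certificate for the current view. The paper's central device is the opposite choice: a View Certificate is a $t+1$-threshold signature, which can always be formed by the front-running correct processors. Safety against a Byzantine leader abusing such a VC does not come from quorum intersection, but from the invariant (Lemma~1) that for every $\mathtt{t}$ the multiset of correct clocks satisfies $(\dagger_{\Gamma,\mathtt{t}})$: any VC has at least one correct contributor whose clock was already at $\mathtt{c}_v$, and forwarding a straggler to $\mathtt{c}_v$ therefore cannot push it past the front without dragging $t$ others along. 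That one-sided invariant -- preserved by the forwarding rule, and replacing the symmetric skew bound you were hoping to derive -- is the idea your proposal is missing, and it is exactly what makes $t+1$-threshold VCs, rather than $2t+1$-quorum timeouts, both sound and $O(n)$-cheap. Your complexity and latency accounting is structurally fine once that lemma is in place, but as written the plan rests on assumptions the model does not supply.
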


 Theorem \ref{t1} obtains worst-case quadratic communication, constant latency per malicious processor, and responsiveness between consecutive honest leaders. This resolves the main open question raised in Cogsworth\cite{Cogsworth21}.

 Combined with Hotstuff, Theorem \ref{t1} gives an optimally resilient SMR protocol for the partial synchrony model that: 
\begin{enumerate} 
\item[(i)] In the worst-case, requires $O(fn+n)$ words to be sent by correct processors after $GST$ before confirmation of the first block of transactions after $GST$, and; 
\item[(ii)] Produces a first confirmed block of transactions after $GST$ within time $O(\Delta f +\delta)$ of $GST$. 
\end{enumerate}

\noindent Since $GST$ is unknown to the protocol, note that similar bounds then hold for the word complexity and latency between honestly produced confirmed blocks after $GST$.


\vspace{0.2cm} 
\noindent \textbf{Implications for results under standard clock assumptions}.
While Fever assumes `partial initial clock synchronisation' to achieve these optimal results, in fact, it has recently been shown\footnote{See \url{https://blog.chain.link/optimal-latency-and-communication-smr-view-synchronization/} } that Fever can also be combined with techniques introduced in \cite{L22}, to  produce a protocol named \emph{Lumiere} that improves on the state-of-the-art \emph{without} the requirement  for partial initial clock synchronisation (but does not achieve the same results as Fever, described above). The results obtained by Lumiere and the trade-offs with Fever are discussed in Section \ref{rw}.

\subsection{Related work} \label{rw}

Tendermint \cite{tendermint16} showed how to use constant size messages for view-change. Casper FFG \cite{casperffg17} extended this approach to allow pipelining.
Hotstuff \cite{yin2019hotstuff} extended these to define an SMR protocol achieving responsivenes and word complexity $O(n)$ within views, but did not rigorously establish an efficient technique for view synchronisation. 
In response to this, a number of papers have described view synchronisation protocols with different trade-offs. 

Cogsworth \cite{Cogsworth21} and Naor-Keidar \cite{NK20} consider a setup in which leaders are chosen according to successive random permutations of the set of processors. They consider a static and \emph{oblivious} adversary, who must choose $GST$ without knowledge of the sequence of randomly chosen leaders, and which must also choose processors to corrupt at the start of the protocol execution without this knowledge. While we do not need to make use of any randomness (beyond that required for the initial cryptographic setup) to establish Theorem \ref{t1}, we also consider such a setup for the purpose of apples-to-apples comparisons in Table 1 (in the `Expected Latency' and `Expected Complexity' columns). Both Cogsworth and Naor-Keidar achieve expected latency $O(\Delta)$ for such a static adversary, but this bound increases to $O(f^2\Delta +\delta)$ in the case that the adversary is adaptive, i.e. if the adversary can choose which processors to corrupt as the execution progresses (and with knowledge as to their choice of $GST$). The principal improvement of Naor-Keidar over Cogsworth is to decrease the expected complexity from $O(n^2)$ in the case of a static and oblivious adversary to $O(n)$. The expected complexity for Cogsworth becomes $O(fn^2+n)$ in the case of an adaptive adversary, and the worst-case complexity is also $O(fn^2+n)$. The expected complexity for Naor-Keidar becomes $O(f^2n+n)$ in the case of an adaptive adversary, and the worst-case complexity is also $O(f^2n+n)$. For a more detailed discussion of Cogsworth and Naor-Keidar, see the Appendix.

\begin{table}[ht]
\centering 
\resizebox{\textwidth}{!}{%
\begin{tabular}{c c c c c c } 
\hline\hline 
Protocol & Expected & Worst-case & Expected & Worst-case  &  Partial Initial\\
& Latency & Latency & Complexity & Complexity &  Clock Sync \\ 
\hline 
Cogsworth & static adv: $O(\Delta)$ & $O(f^2\Delta +\delta)$ & static adv: $O(n^2)$ & $O(fn^2+n)$ & Not needed \\  [1ex] 
& adaptive adv: $O(f^2 \Delta+\delta )$ & & adaptive adv: $O(fn^2 +n)$ & &  \\
\hline 
Naor-Keidar & static adv: $O(\Delta)$ & $O(f^2\Delta +\delta)$ & static adv: $O(n)$ & $O(f^2n +n)$  & Not needed \\[1ex]
& adaptive adv: $O(f^2 \Delta +\delta)$ & & adaptive adv: $O(f^2n+n)$ & &  \\
\hline 
Lewis-Pye & $O(n\Delta)$ & $O(n\Delta)$  & $O(n^2)$ & $O(n^2)$  & Not needed \\
\hline 
Raresync & $O(n\Delta)$ & $O(n\Delta)$  & $O(n^2)$ & $O(n^2)$  & Not needed \\
\hline 

Fever & static adv: $O(\Delta)$  & $O(f\Delta +\delta)$  & static adv: $O(n)$ & $O(fn+n)$  & Needed \\
(this paper) & adaptive adv: $O(f \Delta +\delta)$ & & adaptive adv: $O(fn+n)$ & &  \\
 [1ex] 
\hline 
\end{tabular}}
\vspace{0.3cm} 
\caption{View Synchroniser Comparisons}
\label{table:PM} 
In Table 1, we assume the \emph{bound} $t$ on the number of Byzantine processors is the largest integer less than $n/3$, so that $t=\Theta(n)$, while $0 \leq f \leq t$ is the \emph{actual} number of Byzantine processors. `Complexity' means `word complexity'. Both latency and word complexity are defined in Section \ref{setup}, as is the `partial  clock synchronisation' condition. We only distinguish explicitly between a static and adaptive adversary when this changes the corresponding bound. 

\end{table}

While the published version of Hotstuff \cite{yin2019hotstuff} did not describe any efficient method for view synchronisation, the original version (posted on the arXiv \cite{HSv1}) did roughly outline an approach to meeting the $O(n^2)$ worst-case complexity bound of Dolev-Reischuk. This approach was made precise and rigorously proved in \cite{L22} and \cite{opodis22}. These papers described view synchronisation protocols which we will refer to as `Lewis-Pye' and `Raresync' respectively. A disadvantage of these protocols over Fever (which we describe in this paper) is that they both have worst-case latency $O(n\Delta)$, as opposed to $O(f\Delta + \delta)$ for Fever, and worst-case complexity $O(n^2)$, as opposed to $O(fn+n)$ for Fever. 

Thus far, we have focused on view synchronisation protocols for the partial synchrony model. It should be emphasized that the stronger efficiency bounds for Fever are achieved via a novel view synchronisation protocol combined with stricter assumptions on initial clock synchronisation (as made precise in Section \ref{setup}). 

It is well-known that protocols in the asynchronous model can achieve \emph{expected} complexity $O(n^2)$ per consensus decision (e.g.\ see \cite{VABA19}). The trade-off when compared with the protocol we present here (when combined with Hotstuff) is that such asynchronous protocols do not require synchronous intervals to be live, but still have complexity $O(n^2)$ in the case that $f=0$.  

In \cite{S21}, Spiegelman describes a (single-shot) protocol for Byzantine Agreement which is designed to operate efficiently under both synchronous and asynchronous conditions. The protocol achieves expected $O(n^2)$ complexity in asynchrony and $O(fn+n)$ in synchrony. In the partial synchrony model, the expected complexity after GST remains $O(n^2)$ in the case that $f=0$ (as opposed to $O(n)$ for Fever). If one isolates the part of the (single-shot) protocol which is designed to function under synchrony, and applies just this protocol in the partial synchrony model, then there is no need for view synchronisation, but the protocol is not then optimistically responsive (e.g.\ successive honest leaders do not produce successive consensus decisions in time $O(\delta)$ during synchrony).   

A recent sequence of papers by Bravo,  Chockler, and Gotsman \cite{disc22,DC22, disc20} describe a modular framework for the analysis of view-based SMR protocols. The aim of those papers is complementary to and different than our aim here. While those authors describe a  general framework and are less concerned with establishing optimal results in terms of complexity and latency (such as those described here), our aim  is to describe a specific view synchronisation protocol achieving state-of-the-art efficiency. An advantage of the approach described by those authors is that it allows for a PBFT-style approach to view change, whereby a single leader may persist until correct processors request a change in leader. By contrast, the view synchronisation protocol we describe here has processors automatically pass through views with different leaders. So, this is at least one sense in which the approach described by Bravo et.\ al.\ is more general than what we describe here. 

\vspace{0.2cm} 
\noindent \textbf{Lumiere}. As noted in the introduction, it has recently been shown that Fever can be combined with techniques from \cite{L22} to give a protocol named Lumiere, which significantly  improves on the state-of-the-art without the need for partial  initial clock synchronisation. Roughly, Lumiere divides the views into sets of views called \emph{epochs} and uses Fever for synchronization within epochs, while requiring a heavier synchronization procedure for movement between epochs.   While the Lewis-Pye \cite{L22} protocol achieves $O(n^2)$ worst-case complexity, a significant drawback is that, even after the first synchronisation point (i.e.\ even after the first time after GST when all correct processors are synchronised on a view with correct leader), a single faulty leader can subsequently cause a delay $O(n\Delta)$ between consensus decisions. Lumiere  overcomes this issue -- while the latency and complexity figures for Lumiere with respect to the measures used in Table \ref{table:PM} (which concern the latency and complexity until the first synchronisation point) are the same as for the Lewis-Pye protocol, Lumiere is optimistically responsive and gives optimal performance after the first synchronisation point, without the need for  partial initial  clock synchronisation. If $\mathtt{t}$ is any time after the first synchronisation point, and if latency is given by the time between $\mathtt{t}$ and  the first consensus decision after $\mathtt{t}$, then Lumiere has latency $O(f\Delta + \delta)$ (while maintaining optimal worst-case complexity $O(n^2)$).

\section{The setup} \label{setup}
We consider a set $\Pi= \{ p_0,\dots, p_{n-1} \}$ of $n$ processors, and let $t$ be the largest integer less than $n/3$. Each processor $p_i$ is told $i$ as part of its input. For the proof of Theorem \ref{t1}, we assume an adaptive adversary that is able to choose at most $t$ processors to corrupt as the execution progresses. A processor that is corrupted by the adversary at any point in the execution is referred to as \emph{Byzantine}, and may behave arbitrarily once corrupted. Processors that are not Byzantine are \emph{correct}. We let $f$ denote the actual number of Byzantine processors. 

\vspace{0.2cm} 
\noindent \textbf{Cryptographic assumptions}. Our cryptographic assumptions are standard for papers on this topic. Processors communicate by point-to-point authenticated channels. We use a cryptographic signature scheme, a public key infrastructure (PKI) to validate signatures, and a threshold signature scheme \cite{boneh2001short,shoup2000practical}.  The threshold signature scheme is used to create a compact signature of $m$-of-$n$ processors, as in other consensus and view synchronisation protocols \cite{yin2019hotstuff}. In this paper, either $m=t+1$ or $m=n-t$.  The size of a threshold signature is $O(\kappa)$, where $\kappa$ is a security parameter, and does not depend on $m$ or $n$.
 We assume a computationally bounded adversary. Following a common standard in distributed computing and for simplicity of presentation (to avoid the analysis of negligible error probabilities), we assume these cryptographic schemes are perfect, i.e.\ we restrict attention to executions in which the adversary is unable to break these cryptographic schemes.

 \vspace{0.2cm} 
\noindent \textbf{Communication}. As noted above, processors communicate using point-to-point authenticated channels. We consider the standard partial synchrony model, whereby a message sent at time $\mathtt{t}$ must arrive by time $\max\{GST,\mathtt{t}\} + \Delta$. While $\Delta$ is known, the value of $GST$ is unknown to the protocol. The adversary chooses $GST$ and also message delivery times, subject to the constraints already defined. 

According to the definition above, messages sent prior to $GST$ may be significantly delayed, but are not lost.  We only use the assumption that messages are not lost, however,  when analysing the word complexity of reaching the \emph{first} synchronisation after $GST$. Our view synchronisation protocol works without this assumption, and the assumption could be dropped if one was to consider complexity measures which are less strict than that we consider here, such as that in \cite{Cogsworth21}.

 \vspace{0.2cm} 
\noindent \textbf{Partial  Initial Clock Synchronisation}. To specify our assumptions on the times at which honest processors begin the protocol execution, let $c(p)$ denote the value of processor $p$'s clock. At any point $\mathtt{t}$ in an execution, let   $T(\mathtt{t}) := \{ c(p): \ p \text{ is correct} \}$. In particular, this means that $T(0)$ is the set of all clock values for correct processors at the start of the protocol execution. 
Our required condition regarding initial clock synchronisation is that, for some known bound $\Gamma$: 

\vspace{0.1cm} 
\begin{enumerate} 
\item[$(\dagger_{\Gamma,0})$]  For any $\mathtt{c}\in T(0)$:

\[ | \{ \mathtt{c}'\in T(0) :\ \mathtt{c}'\geq \mathtt{c}-\Gamma \} | \geq t+1. \] 

\end{enumerate} 
Recall that $t$ is the bound on the number of Byzantine processors. So, the condition above says that, for each correct processor $p$, there are at least $t$ other correct processors whose clocks (may be arbitrarily ahead of $p$'s clock but) are at most $\Gamma$ behind $p$'s clock. Specifically, this is really a condition on the most advanced clock of an correct processor. If $p$'s clock is the most advanced amongst correct processors, then we require that there are at least $t$ correct processors whose clocks are at most $\Gamma$ behind $p$'s clock. Another way of looking at this is that all correct processors begin the protocol execution with their local clock set to 0, and that if $p$ is the first correct processor to begin the protocol execution (while other clocks may still be negative, so that those processors are still waiting to start), then at least $t$ other correct processors begin the protocol execution within time $\Gamma $. Note that this condition does not place any bound on the maximum difference between the clocks of correct processors. 

For the sake of simplicity, we will also initially assume that all correct processors have identical clock speeds. Then, in Section \ref{clocksagain}, we will consider realistic relaxations of this condition that suffice to give our results. 

 \vspace{0.2cm}
\noindent   \textbf{The underlying protocol}. We suppose view synchronisation is required for some underlying protocol (such as Hotstuff) with the following properties: 
\begin{itemize} 
\item \textbf{Views}. Instructions are divided into views. Each view $v$ has a designated \emph{leader}, denoted $\mathtt{lead}(v)$.   For some parameter $k\geq 3$ (which can be chosen to suit the protocol designer's needs), we suppose views are grouped into sets of $k$,  so that the leader\footnote{These assumptions are made for the purpose of proving Theorem \ref{t1}. In verifying the bounds given in Table 1, we will also consider the possibility of random leader selection.} for view $v$ is processor $p_i$ where $i:=\lfloor v/k \rfloor \text{ mod } n$. If $v \text{ mod } k=0$, then $v$ is called `initial'.

\item \textbf{Quorum certificates}. The successful completion of a view $v$ is marked by all processors receiving a \emph{Quorum Certificate} (QC) for view $v$. The QC is a threshold signature of length $O(\kappa)$ (for the security parameter $\kappa$ that determines the length of signatures and hash values) combining $n-t$ signatures from different processors testifying that they have completed the instructions for the view. 
In a chained implementation of Hotstuff, for example, the leader will propose a block, processors will send votes for the block to the leader, who will then combine those votes into a QC and send this to all processors. Alternatively, one could consider a (non-chained) implementation of Hotstuff, in which the relevant QC corresponds to a successful third round of voting. Note that the production of QCs is not a  restrictive assumption, since if it is not satisfied one can easily amend the instructions of the protocol so that it is. 
\item \textbf{Sufficient time for view completion}. We suppose there exists some known $x\geq 2$ such that if $\mathtt{lead}(v)$ is correct, if (the global time) $\mathtt{t}\geq \text{GST}$, and if at least $n-t$  correct processors are in view $v$ from time $\mathtt{t}$ until either they receive a QC for view $v$ or until $\mathtt{t}+x\delta$, then  all correct processors will receive a QC for view $v$ by time  $\mathtt{t}+x\delta$, so long as all messages sent by correct processors while in view $v$ are received within time $\delta \leq \Delta$. For the sake of simplicity, we assume $\Gamma$ from the definition of `partial  clock synchronisation' is equal to $x\Delta$-- if these values differ then one can just take the maximum of the two values.


\end{itemize} 

\vspace{0.1cm} 

\noindent \textbf{The view synchronisation task}. For $\Gamma$ as above, we must ensure: 
\begin{enumerate} 
\item If a correct processor is in view $v$ at time $\mathtt{t}$ and in view $v'$ at $\mathtt{t}'\geq \mathtt{t}$, then $v'\geq v$. 
\item There exists some correct $\mathtt{lead}(v)$ and $\mathtt{t}\geq$ $GST$ such that each correct processor is in view $v$ from time $\mathtt{t}$ until either it receives a QC for view $v$ or until $\mathtt{t}+\Gamma$.
\end{enumerate}
Condition (1) above is required by standard view-based SMR protocols to ensure consistency. Since $GST$ is unknown to the protocol, condition (2)  suffices to ensure the successful completion of infinitely many views with correct leaders. By a \emph{view synchronisation protocol}, we mean a protocol which determines when processors enters views and which satisfies conditions (1) and (2) above.

\vspace{0.2cm} 
\noindent \textbf{Complexity measures}. Our proofs are quite robust to the precise notions of latency and word complexity considered, and will hold for any of the definitions used in previous papers on the topic such as \cite{Cogsworth21,NK20,disc22}. For the sake of concreteness, we fix complexity measures which are as strict as possible, and note that if we were to adopt the more relaxed measures used in \cite{Cogsworth21}, for example, then we could weaken the requirement that messages sent before $GST$ are not lost. 

By a `word', we mean a message of length $O(\kappa)$, where $\kappa$ is the security parameter determining the length of signatures and hash values. We make the following definitions.
 Let $\mathtt{t}^{\ast}$ be the least time $>GST$ at which the underlying protocol has some correct $\mathtt{lead}(v)$ produce a QC for view $v$ (if there exists no such time, set $\mathtt{t}^*:=\infty$). The worst-case word complexity is the maximum number of words sent by correct processors (combined) between time $GST+\Delta$ and $\mathtt{t}^{\ast}$. The worst-case latency is the maximum possible value of $\mathtt{t}^{\ast}-GST$. 

 \vspace{0.2cm} 
\noindent \textbf{Defining optimistic responsiveness}. We do not need to define optimistic responsiveness to establish Theorem \ref{t1}. For the sake of concreteness, however, we can define our view synchronisation protocol to be optimistically responsive if the worst-case latency is $O(f\Delta+\delta)$, where $f$ is the (unknown) number of Byzantine processors and $\delta \leq \Delta$ is the actual (unknown) bound on message delay after $GST$. Note that our latency and complexity measures above concern the time and word complexity until the first consensus decision produced by a correct leader strictly \emph{after} GST, and imply that single-shot protocols cannot be optimistically  responsive. 


\section{The protocol} \label{prot}

Recall that views are grouped into sets of $k$,  so that the leader for view $v$ is processor $\lfloor v/k \rfloor \text{ mod } n$. If $v \text{ mod } k=0$, then $v$ is called `initial'. To synchronise processors, we have a predetermined `clock-time' corresponding to each view:  The clock-time corresponding to view $v$ is $\mathtt{c}_v:=\Gamma v$. 

The rough idea is that, at certain points in the execution (and to satisfy optimistic responsiveness), we have processors instantaneously forward their clock to some clock-time $\mathtt{c}_v$ and enter view $v$. We do this in such a way to ensure that, if $p$ is the correct processor whose local clock is most advanced, then there are always at least $t$ other correct processors whose local clocks are at most $\Gamma$ behind $p$'s clock. This will suffice to ensure correct leaders are able to synchronise all correct processors after $GST$.

\vspace{0.2cm}
\noindent The instructions are defined simply as follows: 

\vspace{0.1cm} 
\noindent \textbf{When processors enter views}. Recall that, at any point in the execution,  $c(p)$ is the value of processor $p$'s clock. If $v$ is initial, then $p$ enters view $v$ when  $c(p)=\mathtt{c}_v$.  If $v$ is not initial, then $p$ enters view $v$ if it is presently in a view $<v$ and it receives a QC (formed by the underlying protocol) for view $v-1$.

\vspace{0.1cm} 
\noindent  \textbf{View Certificates}. When a correct processor $p$ enters a view $v$ which is initial, it  sends a $\mathtt{view}\ v$ message to $\mathtt{lead}(v)$. This message is just the value $v$ signed by $p$. Once $\mathtt{lead}(v)$ receives $t+1$ $\mathtt{view}\ v$ messages from distinct processors, it  combines these into a single threshold signature, which is a view certificate (VC) for view $v$, and sends this VC to all processors.\footnote{It is convenient throughout to assume that when a leader sends a message to all processors, this includes itself. } 

\vspace{0.1cm} 
\noindent \textbf{When processors forward clocks}.   At any point in the execution, if a correct processor $p$ receives a QC for view $v-1$ (formed by the underlying protocol) or a VC for view $v$, and if $c(p)<\mathtt{c}_v$, then $p$ instantaneously forwards their clock to $\mathtt{c}_v$.

\vspace{0.2cm}
Pseudocode for the protocol is given in Algorithm 1.

\begin{algorithm} 
\caption{The instructions for processor $p$.}
\begin{algorithmic}[1]

    \State \textbf{Local variables} 

    \State $c(p)$, intially 0 \Comment{This is the value of $p$'s clock }

    \State $v$, initially 0 \Comment{This is the present view of $p$.}

    \State 

    \State \textbf{Global parameters} 

    \State $n$  \Comment{Number of processors}

    \State $t$   \Comment{Largest integer $<n/3$}

    \State $k:=3$      \Comment{Can take larger values.}

    \State $\mathtt{c}_{v'}:=v'\Gamma$, $v'\in \mathbb{N}_{\geq 0}$ \Comment{Defines clock times}

    \State $\mathtt{lead} (v'):= p_i$ for $i=\lfloor v'/k \rfloor \text{ mod } n$ and  $v'\in \mathbb{N}_{\geq 0}$ \Comment{Specifies leaders}

    \State 

    \State \textbf{Upon} $c(p)==\mathtt{c}_{v'}$ for $v'$ initial 

    \State \hspace{0.5cm} Set $v:=v'$  
    \State \hspace{0.5cm} Send a $\mathtt{view}\ v$ message to $\mathtt{lead}(v)$

    \State 

       \State \textbf{Upon} first seeing a QC for view $v'\geq v$ 

    \State \hspace{0.5cm} Set $v:=v'+1$  
    \State \hspace{0.5cm} If $c(p)<\mathtt{c}_{v'+1}$ set $c(p):=\mathtt{c}_{v'+1}$

    \State 

           \State \textbf{Upon} first seeing a VC for initial view $v'> v$ 

    \State \hspace{0.5cm} Set $v:=v'$  
    \State \hspace{0.5cm} If $c(p)<\mathtt{c}_{v'}$ set $c(p):=\mathtt{c}_{v'}$

    \State 

    \State \textbf{If} $p==\mathtt{lead}(v')$ for $v'\geq v$ \textbf{then} 

    \State \hspace{0.5cm} \textbf{Upon} first seeing $\mathtt{view}\ v'$ messages from $t+1$ distinct processors

    \State \hspace{1cm} Form a VC for view $v'$ and send to all processors

\end{algorithmic}
\end{algorithm}

\vspace{0.2cm}
\noindent \textbf{The informal intuition behind the protocol:} Partial initial  clock synchronisation requires that, at the start of the protocol execution, and if $p$ is the correct processor whose local clock is most advanced, there are at least $t$ other correct processors whose local clocks are at most $\Gamma$ behind $p$'s clock. The protocol above is specified to ensure this condition remains true throughout the execution -- see Section \ref{proofs} for a simple proof. This will be easily seen by checking that the condition can never be violated by the forwarding of clocks. 
Now suppose that $\mathtt{lead}(v)$ is correct and that a correct processor, $p$ say, is the first to enter view $v$ after $GST$ at time $\mathtt{t}$. Our condition on local clocks, described above, means that $t$ other correct processors will also enter view $v$ within a short time. Since $\mathtt{lead}(v)$ only requires $t+1$ signatures to form a VC for view $v$, all correct processors will then receive a VC for view $v$ within a short time. The underlying protocol will then have $\mathtt{lead}(v)$ put together a QC for view $v$.

\section{The proofs} \label{proofs}

It is immediate from the instructions that if a correct processor enters a view $v$ then it cannot subsequently enter any lower view.

Recall that, at any point $\mathtt{t}$ in an execution, $T(\mathtt{t}) := \{ c(p): \ p \text{ is correct} \}$. Our condition for `partial initial clock synchronisation' required that a certain condition $(\dagger_{\Gamma,0})$ holds at the start of the protocol execution. This condition requires that if $p$ is the correct processor whose local clock is most advanced, then at least $t$ other correct processors have clocks that are at most $\Gamma$ behind $p$'s clock. The key to the proof is to show that an analogous condition then holds at all times. 

\begin{lemma} \label{lem1} 
For all $\mathtt{t}$ the following condition holds: 

\vspace{0.1cm} 
\begin{enumerate} 
\item[$(\dagger_{\Gamma,\mathtt{t}})$]  For any $\mathtt{c}\in T(\mathtt{t})$:

\[ | \{ \mathtt{c}'\in T(\mathtt{t}) :\ \mathtt{c}'\geq \mathtt{c}-\Gamma \} | \geq t+1. \] 

\end{enumerate} 
\end{lemma}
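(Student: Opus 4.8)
The plan is to prove the condition $(\dagger_{\Gamma,\mathtt{t}})$ for every $\mathtt{t}$ by induction over the (locally finite) set of times at which some correct processor's clock is forwarded. Two simplifications make the bookkeeping light. First, since the requirement ``$\mathtt{c}'\geq \mathtt{c}-\Gamma$'' only gets weaker as $\mathtt{c}$ decreases, the condition $(\dagger_{\Gamma,\mathtt{t}})$ is equivalent to its single instance at the largest correct clock value: writing $\mathtt{c}_{\max}(\mathtt{t}):=\max T(\mathtt{t})$, it suffices to check that $|\{\mathtt{c}'\in T(\mathtt{t}):\ \mathtt{c}'\geq \mathtt{c}_{\max}(\mathtt{t})-\Gamma\}|\geq t+1$. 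Second, I would first record the auxiliary fact that whenever a correct processor $p$ is in view $v'$ its clock satisfies $c(p)\geq \mathtt{c}_{v'}=\Gamma v'$: inspecting the three ways $p$ can enter view $v'$ (its clock reaching $\mathtt{c}_{v'}$ when $v'$ is initial, or $p$ forwarding its clock to at least $\mathtt{c}_{v'}$ upon seeing a QC for view $v'-1$ or a VC for view $v'$) shows $c(p)\geq \mathtt{c}_{v'}$ at entry, and clocks are never decreased.

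For the induction, $(\dagger_{\Gamma,0})$ is exactly the partial initial clock synchronisation assumption. Between consecutive clock-forwarding events all correct clocks advance at the common speed, so $T(\mathtt{t})$ is translated rigidly and the condition propagates along such intervals; hence it suffices to show that if $(\dagger_{\Gamma,\mathtt{t}})$ holds for all $\mathtt{t}<\mathtt{t}_1$ then it holds at a forwarding event $\mathtt{t}_1$. Let $v^*$ be the largest view to which some correct processor forwards its clock at $\mathtt{t}_1$. Forwards only raise clocks, so either the maximum correct clock is unchanged at $\mathtt{t}_1$ -- in which case the condition persists, since the set of correct clocks that are $\geq \mathtt{c}_{\max}(\mathtt{t}_1)-\Gamma$ can only have grown -- or the new maximum correct clock equals $\mathtt{c}_{v^*}$, and it remains to exhibit $t+1$ correct processors whose clocks at $\mathtt{t}_1$ are at least $\mathtt{c}_{v^*}-\Gamma=\mathtt{c}_{v^*-1}$.

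Fix a correct $p$ forwarding to $\mathtt{c}_{v^*}$ at $\mathtt{t}_1$; by the protocol it does so because at $\mathtt{t}_1$ it holds either a QC for view $v^*-1$ or a VC for view $v^*$. In the QC case, the certificate aggregates $n-t$ signatures of processors that completed view $v^*-1$, of which at least $n-2t\geq t+1$ are correct; each such correct processor was in view $v^*-1$ at some time $\leq \mathtt{t}_1$ and so, by the auxiliary fact, had clock $\geq \mathtt{c}_{v^*-1}$ then, hence also at $\mathtt{t}_1$. In the VC case, the certificate aggregates $t+1$ $\mathtt{view}\ v^*$ messages, at least one of them, from a correct processor $q$, sent at a time $\mathtt{t}_q<\mathtt{t}_1$ (strictly earlier, since the message had to reach $\mathtt{lead}(v^*)$ and the resulting VC had to reach $p$), at which moment $c(q)=\mathtt{c}_{v^*}$ by the protocol; applying the inductive hypothesis $(\dagger_{\Gamma,\mathtt{t}_q})$ to the value $\mathtt{c}_{v^*}\in T(\mathtt{t}_q)$ yields $t+1$ correct processors with clock $\geq \mathtt{c}_{v^*}-\Gamma$ at $\mathtt{t}_q$, hence at $\mathtt{t}_1$. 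Either way the required $t+1$ correct processors exist, so $(\dagger_{\Gamma,\mathtt{t}_1})$ holds and the induction closes.

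I expect the VC case to be the delicate point. A quorum certificate directly witnesses $t+1$ correct processors that reached the relevant view, but a view certificate only directly witnesses one; the device is to fold the invariant back onto itself at the earlier instant when that single correct processor's clock was pinned at $\mathtt{c}_{v^*}$, which is sound precisely because that instant is strictly earlier and the invariant can only become easier to maintain as clocks advance monotonically. The residual technicalities -- that it is enough to examine forwarding events, that these are locally finite, the strictness of $\mathtt{t}_q<\mathtt{t}_1$ (or, if zero message delays are permitted, a right-continuity argument in its place), and small-$v^*$ corner cases -- are routine.
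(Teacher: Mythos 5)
Your proof is correct and takes essentially the same approach as the paper. The paper phrases the argument as a ``first counterexample'' contradiction rather than an explicit induction over clock-forwarding events, but the two are the same argument: both rest on (a) the observation that a correct processor in view $v'$ has clock at least $\mathtt{c}_{v'}$, (b) the QC case, where $n-2t\geq t+1$ correct contributors directly supply the required witnesses, and (c) the VC case, where only one correct contributor is guaranteed and the invariant must be folded back onto itself at that contributor's earlier send time -- exactly your ``delicate point.'' Your extra scaffolding (reduction to the maximal clock value, rigid translation between forwarding events, local finiteness) makes the bookkeeping more explicit but does not change the core argument.
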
 

 Before proving Lemma \ref{lem1}, we note that the lemma does \emph{not} place any bound on the maximum difference between the local clocks of correct processors. In fact, even if all clocks are initially perfectly synchronised, the local clocks of two correct processors can move arbitrarily far apart prior to $GST$. Nevertheless, the fact that $(\dagger_{\Gamma,\mathtt{t}})$ holds for all $\mathtt{t}$ will suffice to establish Theorem \ref{t1}. 

 \begin{proof} (Lemma \ref{lem1}) 
Since the local clocks of correct processors only ever move forward, it follows that at any point in an execution, if a correct processor $p$ has already contributed to a QC or a VC for view $v$, then $c(p)\geq \mathtt{c}_v$. 
To prove that $(\dagger_{\Gamma,\mathtt{t}})$ holds for all $\mathtt{t}$, suppose towards a contradiction that  there is a first point of the execution, $\mathtt{t}$ say,  for which there exists some correct processor $p$ such that 
$| \{ \mathtt{c} \in T(\mathtt{t}):\ \mathtt{c}\geq c(p) -\Gamma \}| <t+1$. 
Then $p$ must forward its clock at $\mathtt{t}$. There are two possibilities: 
\begin{enumerate} 
\item $p$ forwards its clock because it receives a VC for some view $v$ with $\mathtt{c}_v>c(p)$.  In this case, there must exist at least one correct processor $p'\neq p$ which contributed to the VC for view $v$. By the choice of $\mathtt{t}$, when $p'$ contributed to the VC at $\mathtt{t}'\leq \mathtt{t}$ we had $ | \{ \mathtt{c} \in T(\mathtt{t'}): \mathtt{c}\geq c(p') -\Gamma \}|\geq t+1$. Since $c(p')\geq \mathtt{c}_v$ when it contributed to the VC, and since $c(p)=\mathtt{c}_v$ at $\mathtt{t}$, at $\mathtt{t}$ we have that $ | \{ \mathtt{c} \in T(\mathtt{t}): \mathtt{c}\geq c(p) -\Gamma \}|\geq t+1$ also, which gives the required contradiction. 
\item $p$ forwards its clock because it sees a QC. In this case, at least $t+1$ correct processors must have contributed to the QC, meaning that their clocks are at most $\Gamma$ behind $p$'s clock, which directly gives the required contradiction.  
\end{enumerate} 
\end{proof}

\begin{lemma} \label{lem2}
If $v$ is initial and $\mathtt{t}$ is the first time any correct processor enters a view $\geq v$:
\begin{enumerate} 
\item[(i)] A correct processor enters view $v$ at $\mathtt{t}$;
\item[(ii)] No correct processor enters any view $v'>v$ at $\mathtt{t}$, and;
\item[(iii)] $c(p)\leq \mathtt{c}_v$ for all correct $p$ at $\mathtt{t}$.
\end{enumerate}
\end{lemma}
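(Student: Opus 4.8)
The plan is to prove part (ii) first, deduce part (i) from it at once (by definition of $\mathtt{t}$ \emph{some} correct processor enters a view $\geq v$ at $\mathtt{t}$, so if none enters a view $>v$ then that processor enters exactly $v$), and then prove part (iii) by the same style of argument. Two structural facts about the protocol will be used throughout. First, a correct processor's clock only moves forward, either continuously or by an instantaneous forwarding, and every forwarding of a clock to the value $\mathtt{c}_w$ simultaneously sets that processor's view to $w$ (a forwarding to $\mathtt{c}_w$ is triggered by a QC for view $w-1$, whose handler sets the view to $w$, or by a VC for the initial view $w$, whose handler sets the view to $w$). Second, there are exactly three ways a correct $p$ can enter a view $v'$: because $c(p)$ reaches $\mathtt{c}_{v'}$ with $v'$ initial; on receipt of a QC for view $v'-1$; or on receipt of a VC for the initial view $v'$. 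I would also note at the outset that $\mathtt{t}$ is well defined (correct clocks grow without bound, so some correct processor eventually enters a view $\geq v$), and that the case $v=0$ is trivial: then $\mathtt{t}$ is simply the first time a correct processor starts and enters view $0$, with all correct clocks $\leq 0=\mathtt{c}_0$ at that instant. So one may assume $v\geq k$ and hence $\mathtt{c}_v>0$.

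Two helper claims will do most of the work. \emph{Claim 1}: if a correct $p$ has $c(p)\geq\mathtt{c}_v$ at time $\mathtt{s}$, then $p$ has entered some view $\geq v$ at some time $\leq\mathtt{s}$. The proof inspects the first time $c(p)$ reaches $\geq\mathtt{c}_v$: if it reaches it by continuous ticking then $c(p)=\mathtt{c}_v$ there and the initial-view rule puts $p$ into view $v$ (unless $p$ is already in a view $\geq v$), and if it reaches it by a forwarding to some $\mathtt{c}_w$ with $w\geq v$, that very forwarding put $p$ into view $w\geq v$. \emph{Claim 2 $(\ast)$}: no QC for a view $w\geq v$, and no VC for an initial view $w\geq v$, is seen by any correct processor at any time $\leq\mathtt{t}$. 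Indeed, each such certificate carries the signature/message of at least one correct processor (a minority of signers can be Byzantine), which that processor produced only while in, hence after entering, view $w\geq v$, and strictly before the certificate was formed; since producing a certificate requires a round of communication among processors already in view $w$, this entry strictly predates any receipt of the certificate, so $w\geq v$ would contradict the minimality of $\mathtt{t}$.

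For (ii), suppose a correct $q$ enters a view $v'>v$ at $\mathtt{t}$. If this is via a QC for view $v'-1$, then $v'-1\geq v$ contradicts $(\ast)$; if it is via a VC for the initial view $v'$, then $v'>v$ contradicts $(\ast)$. Otherwise $c(q)=\mathtt{c}_{v'}$ at $\mathtt{t}$, so $c(q)\geq\mathtt{c}_{v'}>\mathtt{c}_v$ at $\mathtt{t}$; by Claim 1 $q$ entered a view $\geq v$ at some $\mathtt{s}\leq\mathtt{t}$, and the case $\mathtt{s}=\mathtt{t}$ would force $c(q)$ to jump from below $\mathtt{c}_v$ directly to $\mathtt{c}_{v'}$ at $\mathtt{t}$, i.e.\ a forwarding on a QC for view $v'-1\geq v$ or a VC for the initial view $v'>v$, contradicting $(\ast)$; hence $\mathtt{s}<\mathtt{t}$, contradicting minimality of $\mathtt{t}$. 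This proves (ii), and (i) follows.

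For (iii), suppose a correct $p$ has $c(p)>\mathtt{c}_v$ at $\mathtt{t}$. By Claim 1, $p$ entered some view $\geq v$ at some $\mathtt{s}\leq\mathtt{t}$; if $\mathtt{s}<\mathtt{t}$ this contradicts minimality, so $\mathtt{s}=\mathtt{t}$, and then (since $c(p)<\mathtt{c}_v$ at all times before $\mathtt{t}$ and continuous ticking cannot carry the clock above $\mathtt{c}_v$ instantaneously) there must be a forwarding at $\mathtt{t}$ to some $\mathtt{c}_w$ with $w>v$, hence a QC for view $w-1\geq v$ or a VC for the initial view $w>v$ at time $\mathtt{t}$, again contradicting $(\ast)$. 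This gives (iii). I expect the one genuinely delicate point to be the timing assertion inside $(\ast)$ — that a certificate for a view $\geq v$ cannot be obtained at the very instant $\mathtt{t}$ — which I would settle by invoking the standing convention that a QC or VC is produced only after a bona fide round of messages among processors already in the relevant view, so its receipt strictly postdates the first entry into that view; everything else is a routine enumeration of the three view-change triggers and the two modes of clock advance.
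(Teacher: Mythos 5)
Your proof is correct and takes essentially the same route as the paper: at the first moment $\mathtt{t}$ any correct processor enters a view $\geq v$, the QC and VC triggers are excluded because each such certificate would force some correct processor to have already been in a view $\geq v$ strictly earlier, and the clock-tick trigger for a $v'>v$ is excluded because a continuously advancing clock must hit $\mathtt{c}_v$ first (and a jump past $\mathtt{c}_v$ is itself a QC/VC-induced view entry). Your Claims 1 and 2 simply make explicit two steps -- the no-skipping argument and the no-early-certificate argument -- that the paper's terser proof leaves implicit, and your explicit derivation of part (iii) fills in what the paper states as an immediate consequence.
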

\begin{proof} 
 Consider the first time any correct processor $p$ enters a view $v'\geq v$.  It cannot be because $p$ sees a VC for view $v'$, because some correct processor must then have contributed to that VC and already have been in view $v'$. It cannot be because $p$ sees a QC for view $v'-1>v-1$, because $t+1$ correct processors must have already contributed to that QC (noting that $p$ is the first to enter any view $\geq v$).
It follows that the first view $v'\geq v$ entered by any correct processor is $v$. When the first correct processor $p$ enters view $v$ we have  $c(p)=\mathtt{c}_{v}$ (either simply because it reaches this value, or else because $p$ sees a QC for view $v-1$), and that $c(p')\leq \mathtt{c}_{v}$ for all correct $p'$ at this point.
\end{proof}

\begin{definition} 
Let $\mathtt{t}(v)$ be the first time at which a correct processor enters view $v$.
\end{definition}

Since correct processors enter an unbounded number of views, it follows from Lemma \ref{lem2} that if $v$ is initial then $\mathtt{t}(v)\downarrow$ and\footnote{We write $x\downarrow$ to denote that the variable $x$ is defined. } $\mathtt{t}(v')>\mathtt{t}(v)$ whenever $v'>v$ and $\mathtt{t}(v')\downarrow$. 

Note also that if $v$ is initial then, for $j\in (0,k)$, a QC for view $v+j$  cannot be formed prior to the formation of a QC for view $v+j-1$. This  follows because (since $v$ is initial, and for $j$ in the given range) correct processors do not enter view $v+j$ without seeing a QC for view $v+j-1$. 
The next lemma will be used to show that correct processors spend a sufficiently long time in each view that a correct leader after $GST$ will be able to produce QCs.

\begin{lemma} \label{lem3}
Suppose $v$ is initial. For each $j\in [0,k)$, let $\mathtt{s}_j$ be the first time (if there exists such) at which a correct processor sees a QC for view $v+j$. The first time at which any correct processor enters view $v+k$ is the minimum amongst the values  $\{\mathtt{t}(v)+k\Gamma \} \cup \{ \mathtt{s_j} + (k-1-j)\Gamma: \ \mathtt{s}_j \downarrow\}  $.   
\end{lemma}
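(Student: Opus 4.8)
To prove Lemma~\ref{lem3}, the plan is to reduce the claimed identity to a computation at the level of clock values. The key observation is that, because $v+k$ is initial, $\mathtt{t}(v+k)$ equals the first instant at which any correct processor's clock attains the value $\mathtt{c}_{v+k}$: on one side, a correct processor entering view $v+k$ has its clock equal to (or instantaneously forwarded to) $\mathtt{c}_{v+k}$ at that moment, so some correct processor has clock $\geq \mathtt{c}_{v+k}$ at time $\mathtt{t}(v+k)$; on the other side, if a correct processor $p$ first has $c(p)\geq \mathtt{c}_{v+k}$ at time $\tau$, then either $c(p)$ reached $\mathtt{c}_{v+k}$ by continuous increment --- and then $p$ enters view $v+k$ at $\tau$, since $v+k$ is initial --- or $c(p)$ was instantaneously forwarded to a value $\geq \mathtt{c}_{v+k}$, which happens only upon seeing a QC for a view $\geq v+k-1$ or a VC for an initial view $\geq v+k$, in either case causing $p$ to enter a view $\geq v+k$ at $\tau$; by Lemma~\ref{lem2} this gives $\mathtt{t}(v+k)\leq \tau$. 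Since correct processors' clocks advance at rate $1$ between instantaneous forwards (I am assuming identical clock speeds, relaxed in Section~\ref{clocksagain}), it then remains only to locate the earliest time a correct processor's clock can reach $\mathtt{c}_{v+k}$. I write $M$ for the claimed value of $\mathtt{t}(v+k)$, and recall $\mathtt{c}_{v+k}-\mathtt{c}_v=k\Gamma$ and $\mathtt{c}_{v+k}-\mathtt{c}_{v+j+1}=(k-1-j)\Gamma$.

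For $\mathtt{t}(v+k)\leq M$, it suffices to show, for each member $s$ of the set defining $M$, that some correct processor has clock $\geq \mathtt{c}_{v+k}$ by time $s$. For $s=\mathtt{t}(v)+k\Gamma$: the correct processor that enters view $v$ at $\mathtt{t}(v)$ (Lemma~\ref{lem2}) has clock $\mathtt{c}_v$ then, hence $\geq \mathtt{c}_v+k\Gamma=\mathtt{c}_{v+k}$ by time $\mathtt{t}(v)+k\Gamma$. For $s=\mathtt{s}_j+(k-1-j)\Gamma$ with $\mathtt{s}_j\downarrow$: the correct processor $q$ that first sees a QC for view $v+j$ at $\mathtt{s}_j$ has $c(q)\geq \mathtt{c}_{v+j+1}$ by time $\mathtt{s}_j$ (either because processing that QC forwards its clock there, or because $q$ was already in a view $>v+j$ and so already had clock $\geq \mathtt{c}_{v+j+1}$), hence $c(q)\geq \mathtt{c}_{v+j+1}+(k-1-j)\Gamma=\mathtt{c}_{v+k}$ by time $s$.

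For $\mathtt{t}(v+k)\geq M$, I would fix a correct processor $p^*$ with $c(p^*)\geq \mathtt{c}_{v+k}$ at $\mathtt{t}^*:=\mathtt{t}(v+k)$; by the reduction above $c(p^*)<\mathtt{c}_{v+k}$ strictly before $\mathtt{t}^*$, and I examine how $c(p^*)$ crosses $\mathtt{c}_{v+k}$ at $\mathtt{t}^*$. A VC for an initial view $\geq v+k$, or a QC for a view $\geq v+k$, cannot have been seen by a correct processor by time $\mathtt{t}^*$: forming either would require a correct processor to have already entered a view $\geq v+k$ (and, for a QC, to have also completed a round of the underlying protocol there), which by Lemma~\ref{lem2} and minimality of $\mathtt{t}^*$ is impossible by $\mathtt{t}^*$. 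Hence the crossing is either via a forward caused by a QC for view $v+k-1$ --- giving $\mathtt{t}^*\geq \mathtt{s}_{k-1}$, which lies in the set since $(k-1)-(k-1)=0$ --- or by continuous increment. In the latter case, let $\tau^*<\mathtt{t}^*$ be the last instant at which $c(p^*)$ was forwarded (or, if there is none, $p^*$'s start of execution, where $c(p^*)=0$); between $\tau^*$ and $\mathtt{t}^*$ the clock runs at rate $1$ with no forwards, so $\mathtt{t}^*-\tau^*=\mathtt{c}_{v+k}-\mathtt{c}^*$ where $\mathtt{c}^*:=c(p^*)(\tau^*)$. The forward at $\tau^*$ comes from a QC for a view $u$ (so $\mathtt{c}^*=\mathtt{c}_{u+1}$) or a VC for an initial view (so $\mathtt{c}^*$ is a clock-time); since $\mathtt{c}^*<\mathtt{c}_{v+k}$, and since a QC for view $v+k-1$ would have set $\mathtt{c}^*=\mathtt{c}_{v+k}$, either $\mathtt{c}^*=\mathtt{c}_{v+j+1}$ coming from a QC for view $v+j$ with $0\leq j\leq k-2$, in which case $\tau^*\geq \mathtt{s}_j$ and $\mathtt{t}^*=\tau^*+(k-1-j)\Gamma\geq \mathtt{s}_j+(k-1-j)\Gamma$; or $\mathtt{c}^*\leq \mathtt{c}_v$, in which case $c(p^*)$ ticks up from $\tau^*$ and passes through $\mathtt{c}_v$ at time $\tau^*+(\mathtt{c}_v-\mathtt{c}^*)$, at which point $p^*$ is in a view $\geq v$ (it enters view $v$ then, or was already in one), so $\mathtt{t}(v)\leq \tau^*+(\mathtt{c}_v-\mathtt{c}^*)$ and $\mathtt{t}^*=\tau^*+(\mathtt{c}_{v+k}-\mathtt{c}^*)\geq \mathtt{t}(v)+k\Gamma$. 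In every case $\mathtt{t}^*$ dominates some member of the set defining $M$, giving $\mathtt{t}^*\geq M$.

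The step I expect to be the main obstacle is not any single computation but the discipline of the case analysis in the second inequality: one must cleanly exclude the ``spurious'' early routes by which $p^*$'s clock could reach $\mathtt{c}_{v+k}$ --- a VC for view $v+k$, or a QC for a view $\geq v+k$ --- which is precisely where Lemma~\ref{lem2} and the round-trip structure of the underlying protocol are used, and one must keep careful track of the bookkeeping that clocks advance at rate exactly $1$ between instantaneous forwards. It is worth remarking that $\mathtt{s}_j$ is automatically defined wherever it is invoked above, since in each such instance the processor concerned has actually seen a QC for view $v+j$.
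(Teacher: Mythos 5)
Your proof is correct and follows essentially the same approach as the paper's: reduce the claim to the first time a correct clock reaches $\mathtt{c}_{v+k}$, exclude (via Lemma~\ref{lem2}) the routes through a VC for a view $\geq v+k$ or a QC for a view $\geq v+k-1$ that is too high, and case-analyse the last clock forward before the crossing, landing either on a QC for some $v+j$ or on continuous progress since entering $v$. The paper's version is considerably terser --- it phrases the argument purely as a case analysis of how the first processor entering $v+k$ got there, leaving the matching upper bound implicit --- whereas you make both inequalities explicit; the content is the same.
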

\begin{proof}
By Lemma \ref{lem2}, some correct processor $p$ enters $v$ at $\mathtt{t}(v)$, and all correct processors $p'$ have $c(p')\leq \mathtt{c}_v$ at this point. As we reasoned in the proof of Lemma \ref{lem2}, it cannot be the case that the first time any correct processor enters a view $v'\geq v+k$ it is because it sees a VC for view $v'$ or a QC for $v'-1\geq v+k$. It follows that the first time a correct processor $p$ enters view $v+k$ it is because its local clock has reached $\mathtt{c}_{v+k}$. 
This happens either because $p$ saw a QC for view $v+j$ ($j\in [0,k)$) and then time $(k-1-j)\Gamma$ passed (meaning zero time if $j=k-1$), or else because  $p$ was the first correct processor to enter view $v$ and time $k\Gamma$ passed since that point.  
\end{proof}

With Lemmas \ref{lem1}, \ref{lem2} and \ref{lem3} in place, the basic intuition behind the idea that a correct leader will produce a QC after $GST$ is clear. Let $\mathtt{lead}(v)$ be correct and such that no correct processor  enters view $v$ prior to $GST$. From Lemma \ref{lem2}, it follows that no correct processor enters any view $v'>v$ prior to $\mathtt{t}(v)$.  By Lemma \ref{lem1},  at least $t+1$  correct processors will have entered view $v$ within time $\Gamma$ -- by Lemma \ref{lem3}, no correct processor will be in any view $>v$ prior to the first of $\mathtt{t}(v)+k\Gamma$ or else the formation of a  QC for view $v$.  The correct processor $\mathtt{lead}(v)$ will then form a VC for view $v$, and all correct processors will be in view $v$ by time $\mathtt{t}+\Gamma +2\Delta$ unless a QC for view $v$ has already been formed by this point. This means that all processors will receive a QC for view $v$ by time $\mathtt{t}+2\Gamma +2\Delta$. Since $\Gamma\geq 2\Delta$ and $k\geq 3$, this suffices. 

Now let us see the details. In the below, we prove more than the fact that a correct $\mathtt{lead}(v)$ will produce a QC for one of the views in $[v,v+k)$. We show that $\mathtt{lead}(v)$ will produce QCs for multiple successive views if $k$ is large enough, since this will be useful in some implementations (such as chained implementations of Hotstuff etc). 

\begin{lemma} \label{lem4}
Suppose $v$ is initial, $\mathtt{lead}(v)$ is correct, and that $\mathtt{t}(v)\geq GST$. Then correct processors will see QCs  for all views in $[v,v+k-2)$ before entering view $v+k$.
\end{lemma}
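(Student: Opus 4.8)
The plan is to trace the block of views $[v,v+k)$ step by step, controlling at each step how quickly the underlying protocol produces the next quorum certificate against how quickly the fastest correct clock can fast-forward all the way to $\mathtt{c}_{v+k}$. Following Lemma~\ref{lem3}, write $\mathtt{s}_j$ for the first time a correct processor sees a QC for view $v+j$ and $M$ for the first time any correct processor enters view $v+k$, so that $M=\min\big(\{\mathtt{t}(v)+k\Gamma\}\cup\{\mathtt{s}_j+(k-1-j)\Gamma:\ \mathtt{s}_j\downarrow\}\big)$. The target is to show that for every $j\in[0,k-2)$ the value $\mathtt{s}_j$ is defined and $\mathtt{s}_j<M$; since $\mathtt{lead}(v)$ broadcasts every QC it forms and $M-\mathtt{s}_j\geq 2\Gamma$ in the binding cases, this also yields that every correct processor sees a QC for view $v+j$ before any correct processor enters view $v+k$. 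One stops at $j=k-3$ on purpose: the last two views of the block may have their QCs pre-empted by the clock-driven entry into $v+k$, which is exactly why the statement restricts to $[v,v+k-2)$.

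First I would handle the base view $v$. By Lemma~\ref{lem2}, at $\mathtt{t}(v)$ some correct $p$ enters view $v$ with $c(p)=\mathtt{c}_v$, and every correct processor has clock $\le\mathtt{c}_v$ then; by $(\dagger_{\Gamma,\mathtt{t}(v)})$ (Lemma~\ref{lem1}) there are at least $t$ further correct processors with clocks in $[\mathtt{c}_v-\Gamma,\mathtt{c}_v]$. None of these $t+1$ correct processors can leapfrog over view $v$: any jump past $\mathtt{c}_v$ needs a QC for a view $\ge v$ or a VC for an initial view $\ge v+k$, and the earliest of these to appear is a QC for view $v$, which itself requires $n-t\ge t+1$ correct processors already in view $v$. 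Hence all $t+1$ of them enter view $v$ and send $\mathtt{view}\ v$ to $\mathtt{lead}(v)$ within time $\Gamma$ of $\mathtt{t}(v)$; as $\mathtt{t}(v)\ge GST$ and $\mathtt{lead}(v)$ is correct, $\mathtt{lead}(v)$ forms and broadcasts a VC for view $v$, and every correct processor (forwarding its clock to $\mathtt{c}_v$ if needed) is in view $v$ by $\mathtt{t}(v)+\Gamma+2\Delta$. By Lemma~\ref{lem3}, no correct processor leaves view $v$ before either a QC for view $v$ is formed or time $\mathtt{t}(v)+k\Gamma$; since $k\ge3$ and $\Gamma=x\Delta\ge2\Delta\ge 2\delta\ge x\delta\cdot(2/x)$, the interval required by the ``sufficient time for view completion'' property lies below $\mathtt{t}(v)+k\Gamma$, so that property fires and yields a QC for view $v$ received by all correct processors by a time $\mathtt{u}_0\le\mathtt{t}(v)+\Gamma+2\Delta+x\delta\le\mathtt{t}(v)+3\Gamma$. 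Thus $\mathtt{s}_0$ is defined, and comparing $\mathtt{s}_0$ with each term defining $M$ gives $\mathtt{s}_0<M$.

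Next I would propagate this along the block by induction. Suppose, for $j\le k-3$, all correct processors have received a QC for view $v+j$ by a time $\mathtt{u}_j\le\mathtt{t}(v)+(j+3)\Gamma$. Then by $\mathtt{u}_j$ every correct processor has forwarded its clock to $\mathtt{c}_{v+j+1}$ and entered view $v+j+1$, which is non-initial since $j+1<k$; by Lemma~\ref{lem3} no correct processor leaves that view before a QC for it is formed or before $\mathtt{t}(v)+k\Gamma$, and the ``sufficient time'' window again lies below $\mathtt{t}(v)+k\Gamma$ by the inductive bound together with $x\delta\le x\Delta=\Gamma$ and $k\ge3$. Hence a QC for view $v+j+1$ reaches all correct processors by $\mathtt{u}_{j+1}:=\mathtt{u}_j+x\delta\le\mathtt{t}(v)+(j+4)\Gamma$, which maintains the invariant and gives $\mathtt{s}_{j+1}$ defined with $\mathtt{s}_{j+1}<M$ (comparing again with the terms of $M$, using $\mathtt{s}_{j'}<\mathtt{s}_{j'+1}$ and $(k-1-j')\Gamma\ge2\Gamma$ for $j'\le k-3$). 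Iterating to $j=k-3$ yields the lemma; the residual inequalities among $\mathtt{s}_j$, the $\mathtt{s}_{j'}+(k-1-j')\Gamma$, and $\mathtt{t}(v)+k\Gamma$ are routine arithmetic in $\Gamma=x\Delta\ge2\Delta$, $\delta\le\Delta$ and $k\ge3$.

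The main obstacle I anticipate is the race between the two mechanisms that advance a processor into view $v+k$ — the clock, which reaches $\mathtt{c}_{v+k}$ exactly $k\Gamma$ after $\mathtt{t}(v)$, and the QC chain, which can be far faster — and, intertwined with this, ensuring that at each step enough correct processors are simultaneously present in the same view for ``sufficient time for view completion'' to apply. For view $v$ this requires the VC mechanism together with $(\dagger_{\Gamma,\mathtt{t}(v)})$ to drag the lagging correct processors up to $\mathtt{c}_v$; for the later views it requires the clock-forwarding-on-QC rule, plus Lemma~\ref{lem3} to certify that no correct processor has already raced ahead of the current view, while simultaneously verifying that the cumulative latency of the QC chain stays below the deadline $\mathtt{t}(v)+k\Gamma$.
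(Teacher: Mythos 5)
Your proposal follows essentially the same approach as the paper's proof of Lemma~\ref{lem4}. Both arguments run through the same sequence: invoke Lemma~\ref{lem2} to see that no correct processor is past view $v$ at $\mathtt{t}(v)$, invoke Lemma~\ref{lem1} (condition $(\dagger_{\Gamma,\mathtt{t}(v)})$) to get $t+1$ $\mathtt{view}\ v$ messages to $\mathtt{lead}(v)$ within time $\Gamma$, invoke Lemma~\ref{lem3} to argue nobody races past view $v$ in the meantime, get a VC disseminated so that all correct processors are in view $v$ by $\mathtt{t}(v)+\Gamma+2\Delta$, and then chain the ``sufficient time for view completion'' property across views $v, v+1, \dots$, using $\Gamma \geq 2\Delta$ and $x\delta \leq \Gamma$ in the bookkeeping. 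What you do differently is only presentational: you unpack the paper's final ``it then follows'' sentence into an explicit induction on $j$ with a named invariant $\mathtt{u}_j \leq \mathtt{t}(v)+(j+3)\Gamma$, and you explicitly name and compare against the quantity $M$ from Lemma~\ref{lem3}. The paper's bound $\mathtt{t}(v)+\Gamma+2\Delta+(j+1)\Gamma$ is marginally tighter than your $\mathtt{t}(v)+(j+3)\Gamma$, but both reduce to the same slack when $\Gamma = 2\Delta$ and $\delta = \Delta$.

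One step in your write-up deserves more care: the assertion that ``$M-\mathtt{s}_j\geq 2\Gamma$ in the binding cases'' is not actually established. The minimum defining $M$ includes the terms $\mathtt{s}_{k-2}+\Gamma$ and $\mathtt{s}_{k-1}$, and when the QC chain runs at network speed (small $\delta$), $\mathtt{s}_{k-1}-\mathtt{s}_{j}$ can be much smaller than $2\Gamma$ for $j\leq k-3$; so if one of those two terms attains the minimum, the claimed gap need not hold, and the inference from ``$\mathtt{s}_j$ exists'' to ``every correct processor sees that QC before $M$'' is not routine. The paper's own proof is equally terse on precisely this point (it bounds the time at which all correct processors see the QCs and asserts without further argument that this is prior to the first entry into $v+k$, without explicitly handling the QC-driven entry case from Lemma~\ref{lem3}), so your proposal is not further off than the original; but if you were to tighten the argument, this is the place to do it. Also, your inequality chain $\Gamma=x\Delta\ge2\Delta\ge 2\delta\ge x\delta\cdot(2/x)$ only yields $\Gamma\geq 2\delta$, not $\Gamma\geq x\delta$; the latter (which is what you need) follows directly from $\Gamma=x\Delta$ and $\delta\leq\Delta$.
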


\begin{proof}  By Lemma \ref{lem2}, no correct processor has entered any view $v'>v$ at $\mathtt{t}(v)$. By Lemma \ref{lem1}, $(\dagger_{\Gamma,\mathtt{t}(v)})$ is satisfied, which means at least $t+1$ $\mathtt{view}\ v$ messages will have been sent to $\mathtt{lead}(v)$  by $\mathtt{t}(v)+\Gamma$ -- by Lemma \ref{lem3}, no correct processor will be in any view $>v$ prior to the point at which these $t+1$ $\mathtt{view}\ v$ messages have been sent to $\mathtt{lead}(v)$. Then $\mathtt{lead}(v)$ will have sent out a VC for view $v$ by $\mathtt{t}(v)+\Gamma +\Delta$, which will be received by all correct processors by time $\mathtt{t}(v)+\Gamma +2\Delta$. It then follows from Lemma \ref{lem3}, and since $\Gamma\geq 2\Delta$, that a QC for each view $j\in [0,v+k-2)$ will be seen by all correct processors by time $\mathtt{t}(v)+\Gamma +2\Delta +(j+1)\Gamma$, prior to any point at which a correct processor enters view $v+k$.  
\end{proof}

 \begin{lemma} \label{lem5}
 The worst-case word complexity is $O(fn+n)$ and the worst-case latency is $O(\Delta f + \delta)$.
 \end{lemma}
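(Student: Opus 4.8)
The plan is to prove the latency bound and the word‑complexity bound separately, in each case first giving a crude argument that already suffices for $f\geq 1$ and then isolating the extra work needed to get the sharp $O(\delta)$ behaviour when $f$ is small. For the latency, let $w$ be the least initial view with $\mathtt{t}(w)\geq GST$. At $GST$ the most advanced correct clock lies in $[\mathtt{c}_{w-k},\mathtt{c}_w)$ — it is $\geq\mathtt{c}_{w-k}$ because $\mathtt{t}(w-k)<GST$ forces some correct processor to have entered $w-k$, and $<\mathtt{c}_w$ because otherwise $\mathtt{t}(w)<GST$ — so, since correct clocks advance at unit rate after $GST$ and by Lemma \ref{lem2} the first view $\geq w$ entered is $w$ itself, $\mathtt{t}(w)\leq GST+k\Gamma$; iterating Lemma \ref{lem3} gives $\mathtt{t}(w+jk)\leq GST+(j+1)k\Gamma$. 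The leaders of the $f+1$ initial views $w,w+k,\dots,w+fk$ are $f+1$ distinct processors (using $f+1\leq n$, which follows from $f\leq t<n/3$), so some $v:=w+jk$ with $j\leq f$ has a correct leader and $\mathtt{t}(v)\in[GST,GST+(f+1)k\Gamma)$. Re‑running the proof of Lemma \ref{lem4} with the actual delay bound $\delta$ in place of $\Delta$ (legitimate because $\mathtt{t}(v)\geq GST$) shows $\mathtt{lead}(v)$ produces a QC for view $v$ within $\Gamma+O(\delta)$ of $\mathtt{t}(v)$, so $\mathtt{t}^{\ast}-GST=O(\Gamma f+\Gamma+\delta)=O(\Delta f+\delta)$ as soon as $f\geq 1$, since $\Gamma=x\Delta$ and $k$ are constants.

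For $f=0$ the above only gives $O(\Delta)$, because of the ``$\Gamma$‑wait'' buried in Lemma \ref{lem4}: after the first processor enters $v$ one waits up to $\Gamma$ for the $t$ clock‑laggards guaranteed by $(\dagger_{\Gamma,\mathtt{t}(v)})$ to reach $\mathtt{c}_v$ and send their $\mathtt{view}\ v$ messages. To remove it I would show that a synchronisation phase of length $O(\delta)$ after $GST$ already leaves at least $n-t$ correct processors in a common view $v'$: by $GST+\delta$ every pre‑$GST$ message is delivered, so all $\mathtt{view}$‑messages and certificates in flight arrive; Lemma \ref{lem1} ensures that for the frontier view at least $t+1$ $\mathtt{view}$‑messages are available, so (with $f=0$) the correct leader assembles and delivers a VC for $v'$ by $GST+2\delta$, pulling every laggard up to $v'$ while at most $t$ processors have run ahead of it; the underlying protocol's completion property then yields a QC for $v'$ by $GST+2\delta+x\delta=O(\delta)$. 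This synchronisation argument in fact subsumes the first pass — once processors are synchronised their clocks are forwarded in lockstep, so no further $\Gamma$‑wait occurs, and the at most $f$ Byzantine‑led view‑groups that follow cost only $O(\Gamma f)$ by Lemma \ref{lem3} — yielding $\mathtt{t}^{\ast}-GST=O(\Delta f+\delta)$ for all $f$.

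For the word complexity I would count the words correct processors send in $[GST+\Delta,\mathtt{t}^{\ast}]$. This interval has length $O(\Delta f+\delta)=O(\Delta(f+1))$, and after the synchronisation point all correct processors move through views together, so by leader rotation at most $f+1$ initial views, hence $O(k(f+1))=O(f+1)$ views in all, are traversed before the QC defining $\mathtt{t}^{\ast}$. Each traversed view contributes at most $n$ $\mathtt{view}$‑messages (one per correct processor), at most $n$ VC‑words from its leader when that leader is correct (Byzantine leaders' messages are not counted), and $O(n)$ words from the underlying linear protocol. The only other words come from processors still lagging at times in $[GST+\Delta,GST+O(\delta)]$; but that sub‑interval is shorter than $k\Gamma$ (as $(2+x)\delta\leq(2+x)\Delta<kx\Delta=k\Gamma$ for $k\geq 3$, $x\geq 2$), so each such processor enters at most one initial view in it and contributes $O(n)$ words overall, and none once the synchronisation VC is delivered — here the non‑loss of pre‑$GST$ messages is exactly what guarantees that VC is assembled, hence that only $O(f+1)$ views are ever active. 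Summing gives $O(fn+n)$.

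The step I expect to be the main obstacle is the $f=0$ synchronisation analysis — pinning down that within $O(\delta)$ of $GST$ there genuinely is a single view holding at least $n-t$ correct processors — since it must combine the delivery of all pre‑$GST$ messages, the invariant $(\dagger_{\Gamma,\mathtt{t}})$ of Lemma \ref{lem1} (to know a VC can actually be \emph{formed}, not merely that clocks are within $\Gamma$), and the clock‑forwarding rule; a secondary subtlety is making the complexity count cap the lagging processors' contribution tightly enough that no spurious extra factor of $n$ appears.
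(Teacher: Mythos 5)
Your $f\geq 1$ latency argument and the overall structure of the word-complexity count are sound, and take a slightly different route from the paper: you pigeonhole over the first $f+1$ initial views after $GST$ and iterate Lemma~\ref{lem3}, while the paper brackets $GST$ between the previous correct leader's initial view $v_0$ and the next one $v_1$. The genuine gap is exactly where you predicted, in the $f=0$ case: the step ``Lemma~\ref{lem1} ensures that for the frontier view at least $t+1$ $\mathtt{view}$-messages are available'' does not follow. Lemma~\ref{lem1} only guarantees that $t+1$ correct clocks are within $\Gamma$ of the maximum; a processor whose clock sits at $\mathtt{c}_{v'}-\Gamma=\mathtt{c}_{v'-1}$ has not reached $\mathtt{c}_{v'}$, has therefore not sent a $\mathtt{view}\ v'$ message, and will not do so for up to $\Gamma=x\Delta$ more time --- an $O(\Delta)$ rather than $O(\delta)$ delay, which is precisely the $\Gamma$-wait you set out to eliminate. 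So as written your synchronisation phase need not yield any VC within $O(\delta)$ of $GST$.

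The ingredient your proof is missing is the paper's use of the \emph{previous} correct leader. For $v_0$ the greatest initial view below the frontier view $v$ with a correct leader $\neq\mathtt{lead}(v)$, the conjunction of $(\dagger_{\Gamma,\mathtt{t}(v_0)})$ and Lemma~\ref{lem3} forces $\mathtt{lead}(v_0)$ to have already broadcast a VC for $v_0$ strictly \emph{before} $GST$, so every correct processor reaches view $\geq v_0$ within one message delay of $GST$. This observation does double duty. It caps the range of initial views that can still be active, which keeps the word-complexity count clean (your claim that a lagger enters at most one initial view in the short interval would otherwise need a separate argument to rule out a cascade of clock-forwardings triggered by a backlog of late QCs and VCs). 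And it is what rescues the $f=0$ latency: the paper splits on the number $d$ of correct processors already in the frontier view $v$ at $GST$ --- if $d\geq t+1$ then $\mathtt{lead}(v)$ already holds $t+1$ $\mathtt{view}$-messages and proceeds at speed $\delta$; if $d<t+1$ the still-current previous leader $\mathtt{lead}(v_0)$ (correct, since $f=0$) closes out its own view-group at speed $\delta$ --- with the ``unless $t+1$ processors enter the next group first'' clause handling the race. The appeal to Lemma~\ref{lem1} alone cannot beat the $\Gamma$-wait; you need $\mathtt{lead}(v_0)$.
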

 \begin{proof} 
We deal with the word complexity first. Let $p$ be the correct processor whose clock is most advanced at $GST$ (breaking ties arbitrarily). Suppose $p$ is in view $v$ at $GST$. Let $v_0$ be the greatest initial view $< v$ such that $\mathtt{lead}(v_0)\neq \mathtt{lead}(v)$ and $\mathtt{lead}(v_0)$ is correct. Let $v_1$ be the least initial view $>v$ such that $\mathtt{lead}(v_1)$ is correct. Since no correct processor will enter any view $>v_0$ prior to the least of $\mathtt{t}(v_0)+k\Gamma$ or the first time at which a correct processor sees a QC for view $v$, and since $(\dagger_{\Gamma,\mathtt{t}(v_0)})$ is satisfied, $\mathtt{lead}(v_0)$ must have sent a VC for view $v_0$ to all processors prior to $GST$. All correct processors will therefore be in at least view $v_0$ by $GST+\Delta$. Lemma \ref{lem4} shows that all correct processors will see a QC for view $v_1$ before entering view $v_1+k$. Let $f^{\ast}$ be the number of Byzantine leaders for initial views in the interval $(v_0,v_1)$. Correct processors will send a maximum of $2(f^{\ast}+3)n$ many $\mathtt{view}$ messages (combined) between $GST+\Delta$ and the time at which $\mathtt{lead}(v_1)$ produces a QC for view $v_1$. If the underlying protocol has correct processors send $O(n)$ messages per view (e.g.\ Hotstuff), then the underlying protocol will also have correct processors send $O((f^{\ast} +3)n) $ messages during this interval. So the worst-case word complexity is $O(fn+n)$, as required. 

Next, we consider the worst-case latency. If $f>0$, then it suffices to observe that $\mathtt{lead}(v_1)$ will produce a QC for view $v_1$ by time $GST + k(f^{\ast}+3)\Gamma$. So suppose $f=0$ and consider the number $d$ of correct processors in view $v$ at $GST$. If $d\geq t+1$, then $\mathtt{lead}(v)$ will produce a QC for view $v$ within time $O(\delta)$ according to our assumptions on the underlying protocol, unless at least $t+1$ processors enter view $v+k$ before this occurs. In the latter case, $\mathtt{lead}(v+k)$ will produce a QC for view $v+k$ within time $O(\delta)$. If $d<t+1$, then (the previous leader) $\mathtt{lead}(v_0)$ will produce a QC for view $v$ within time $O(\delta)$, unless at least $t+1$ processors enter view $v$ before this occurs. In the latter case, $\mathtt{lead}(v)$ will produce a QC for view $v$ within time $O(\delta)$.
 \end{proof}

 Lemma \ref{lem5}  completes the proof of Theorem \ref{t1}. We finish this section by justifying the entries of Table 1, which state that the expected latency is $O(\Delta)$ and the expected word complexity is $O(n)$ with a static adversary according to the model of \cite{Cogsworth21}. According to this model, leaders are given by successive random permutations of the set of all processors. The adversary is static and \emph{oblivious}, which means that they must choose which processors to corrupt at the start of the protocol execution without knowledge as to the random sequence of leaders, and must also choose $GST$ without this knowledge. In this case, the expected value $f^{\ast}$ from the proof of Lemma \ref{lem5} is $O(1)$, which means we get expected latency $O(\Delta)$ and expected word complexity $O(n)$, as required.

\section{Tying up loose ends} \label{clocksagain}

\subsection{A note on optimistic responsiveness and Byzantine leaders}

In the proof of Lemma \ref{lem5}, it was only actually the number of Byzantine \emph{leaders} before the first correct leader after $GST$ that mattered (rather than the total number of Byzantine parties) in establishing that the worst-case word complexity is $O(fn+n)$. The proof that the worst-case latency is $O(f\Delta + \delta)$ was somewhat more subtle, and considered the total number of Byzantine processors. Roughly, the difficulty occurs when none of the relevant leaders are Byzantine, but a correct processor has just entered initial view $v$ at GST, while all other correct processors are still in previous views. In this scenario, the previous leader cannot produce a QC (if $t$ parties are Byzantine while not in their role as leader). Meanwhile, the leader for view $v$ has to wait time $\Gamma +\delta$ before producing a VC. As we argued in the proof of Lemma \ref{lem5}, this is not an issue if we let $f$ count the total number of Byzantine parties.

Once a first correct leader produces a QC after $GST$, however, this subtlety is no longer relevant. Let $\mathtt{lead}(v)$ be correct and such that $\mathtt{t}(v)\geq GST$.  Let $v'$ be the least view with correct leader $>v$ and suppose that the number of initial views with Byzantine leader in the interval $(v,v')$ is $f^{\ast}$. 
Then the proof of Lemma \ref{lem5} is easily modified to show that correct processors send $O(f^{\ast}n+n)$ many words between the times at which $\mathtt{lead}(v)$ and $\mathtt{lead}(v')$ produce QCs, and that the time between these events is $O(f^{\ast}\Delta +\delta)$.

\subsection{Revisiting the assumptions regarding clock synchronisation}

In Section \ref{setup} we assumed that all processors have identical clock speeds. We now consider to what extent we can relax this condition. As we do so, we also consider how realistic are the required assumptions in the context of reasonable bounds on network delays, the length of periods of asynchrony etc., and in a context where atomic clocks are available for use by processors.  Recall that atomic clocks can reasonably be assumed to have error less than 1 second every 100 million years.\footnote{See, for example, \url{https://en.wikipedia.org/wiki/Atomic_clock}} 

In the partial synchrony model it is only for the sake of technical convenience that we consider a single period of asynchrony and then a single period of synchrony after $GST$. In reality, we are interested in contexts where network conditions oscillate between synchrony and asynchrony. We require our protocols to maintain consistency during periods of asynchrony, and to be live during periods of synchrony. To ensure that our analysis extends to such a scenario, let us therefore consider our requirements as the network oscillates between periods of sychrony and asynchrony in this fashion. 

Fix $k:=3$. A similar analysis will also apply for larger values of $k$. Let us say an open interval $(\mathtt{t},\mathtt{t}')$ is synchronous if every message sent in this interval arrives within time $\Delta$. Let $\ell:= 3(t+3)\Gamma$, where $t$ is the bound on the number of Byzantine processors.  If  $(\dagger_{\Gamma,\mathtt{t}'})$ holds for all $\mathtt{t}'\in I:=(\mathtt{t},\mathtt{t}+\ell)$ and if $I$ is synchronous, the proofs of Section \ref{proofs} established that some correct leader will produce a QC during interval $I$ and send this to all processors. With this in mind, we inductively define a sequence of times $(\mathtt{t}_i)_{i\geq 0}$ as follows: 
\begin{itemize} 
\item Let $\mathtt{t}_0$ be the least that $(\mathtt{t}_0, \mathtt{t}_0 +\ell)$ is synchronous. 
\item Given $\mathtt{t}_i$, let $\mathtt{t}_{i+1}$ be the least $\mathtt{t}\geq \mathtt{t}_i+\ell$ such that $(\mathtt{t}, \mathtt{t} +\ell)$ is synchronous.
\end{itemize}

We suppose that every $\mathtt{t}_i$ is defined. For concreteness, it is also useful to stipulate some specific values -- a similar argument will hold for comparable values: 

\begin{itemize}
\item Suppose $\Delta= 1$ second. 
\item Suppose $\mathtt{t}_0<10^5$ years and the maximum value $\mathtt{t}_{i+1}-\mathtt{t}_{i}$ is less than $10^5$ years. 
\item For view $v$, suppose $\mathtt{t}$ is the first time at which $t+1$ correct processors are in view $v$, and that $\mathtt{t}'$ is the first time at which a correct processor sees a QC for view $v$. Define $\mathtt{u}_v:= \mathtt{t}'-\mathtt{t}$. When $\mathtt{u}_v$ is defined, we suppose it always has at least the mininum value $\mathtt{u}$. For the sake of concreteness, we suppose $\mathtt{u}=10^{-2}$ seconds.  
\item Suppose that $(\dagger_{\Delta,0})$ holds, and that $\Gamma=2\Delta$. 
\end{itemize}
We note that the assumptions above are weak: In particular, we assume only that a synchronous interval exists every $10^5$ years. Then we claim $(\dagger_{\Gamma,\mathtt{t}})$ holds for all $\mathtt{t}$ -- this is the condition required to ensure that every interval $(\mathtt{t}_i, \mathtt{t}_i+\ell)$ has a correct leader produce a QC. Towards a contradiction, suppose there exists a least value $i^{\ast}$ such that $(\dagger_{\Gamma,\mathtt{t}})$ fails to hold for some $\mathtt{t}^{\ast}$ in the interval $[\mathtt{t}_{i^{\ast}},\mathtt{t}_{i^{\ast}+1})$. For each $\mathtt{t}$, let $\Gamma(\mathtt{t})$ be the smallest $\Gamma'$ such that $(\dagger_{\Gamma',\mathtt{t}}) $ holds. Note that: 
\begin{itemize} 
\item Every interval $(\mathtt{t}_i,\mathtt{t}_i +\ell)$ such that $i<i^{\ast}$ has at least one correct processor synchronise the clocks of correct processors to within time $\Delta$, i.e. $(\dagger_{\Delta,\mathtt{t}})$ holds for some $\mathtt{t}$ in this interval. 
\item When a correct processor sees a QC and forwards its clock at $\mathtt{t}$, this may cause $\Gamma(\mathtt{t})$ to increase, e.g.\ if the leader is Byzantine and only sends the QC to certain processors, or if the QC is not sent during a synchronous interval. In this case, however, the maximum value of $\Gamma(\mathtt{t}) $ is still at most $\Gamma-\mathtt{u}$. 
\item If a processor forwards its clock because it sees a VC at $\mathtt{t}$, this does not increase $\Gamma(\mathtt{t})$. 
\end{itemize}

Define $\mathtt{t}:=\mathtt{t}_{i^{\ast}-1}+\ell$ if $i^{\ast}\neq 0$, and define $\mathtt{t}:=0 $ if $i^{\ast}=0$. Let $\mathtt{t}^{\ast}$ be defined as above. We conclude that $\Gamma(\mathtt{t})$ is at most $\text{max} \{ \Delta, \Gamma-\mathtt{u} \}$, to within a small error term which is the maximum drift of clocks within an interval of length $\ell$. Since we suppose $\mathtt{u}=10^{-2}$ seconds, since our clocks have drift at most 1 second every 100 million years, and since some clocks may drift slow while others drift fast,  this means that $\mathtt{t}^{\ast}-\mathtt{t}>5\times 10^5$ years. This gives the required contradiction, since we assumed above that $\mathtt{t}_0<10^5$ years and $\mathtt{t}_{i+1}-\mathtt{t}_i$ is less than $10^5$ years for all $i$.  

\section{Concluding comments}

We have defined Fever, which is a novel view synchronisation protocol. If $n$ is the number of processors and $t$ is the largest integer $<n/3$, then Fever has resilience $t$, and in all executions with at most $0\leq f\leq t$ Byzantine parties and network delays of at most $\delta \leq \Delta$ after $GST$ (where $f$ and $\delta$ are unknown), Fever has worst-case word complexity $O(fn+n)$ and worst-case latency $O(\Delta f + \delta)$. This improves significantly on the state-of-the-art. 

The trade-off is that Fever requires greater assumptions than previous view synchronisation protocols regarding the drift of clocks prior to $GST$. We have argued in Section \ref{clocksagain} that there are scenarios in which our required assumptions are reasonable. Atomic clocks can now be purchased for a few thousand US dollars, and we showed that under reasonable assumptions regarding network latency etc., a system implementing Fever will be able to handle periods of asynchrony of the order of $10^5$ years. Of course, this is more than is reasonably required, and so even the use of less accurate clocks may suffice in many scenarios. 

We also noted, in Section \ref{rw}, that it has been shown that Fever can be combined with other techniques to produce a protocol called Lumiere that improve on the state-of-the-art without the need for partial  initial clock synchronisation. Since Lumiere still has trade-offs with Fever, the following question remains: 

\begin{question}
Does there exists a view synchronisation protocol for the partial synchrony model that achieves the same efficiency bounds as Fever, but which can accommodate unbounded clock drift prior to $GST$? 
\end{question}

\bibliographystyle{plainurl}

\begin{thebibliography}{10}

\bibitem{HSv1}
Ittai Abraham, Guy Gueta, and Dahlia Malkhi.
\newblock Hot-stuff the linear, optimal-resilience, one-message {BFT} devil.
\newblock {\em CoRR}, abs/1803.05069, 2018.
\newblock URL: \url{http://arxiv.org/abs/1803.05069}, \href
  {http://arxiv.org/abs/1803.05069} {\path{arXiv:1803.05069}}.

\bibitem{VABA19}
Ittai Abraham, Dahlia Malkhi, and Alexander Spiegelman.
\newblock Asymptotically optimal validated asynchronous byzantine agreement.
\newblock In {\em Proceedings of the 2019 ACM Symposium on Principles of
  Distributed Computing}, PODC '19, page 337–346, New York, NY, USA, 2019.
  Association for Computing Machinery.
\newblock \href {https://doi.org/10.1145/3293611.3331612}
  {\path{doi:10.1145/3293611.3331612}}.

\bibitem{boneh2001short}
Dan Boneh, Ben Lynn, and Hovav Shacham.
\newblock Short signatures from the weil pairing.
\newblock In {\em International conference on the theory and application of
  cryptology and information security}, pages 514--532. Springer, 2001.

\bibitem{disc22}
Manuel Bravo, Gregory Chockler, and Alexey Gotsman.
\newblock {Liveness and Latency of Byzantine State-Machine Replication}.
\newblock In Christian Scheideler, editor, {\em 36th International Symposium on
  Distributed Computing (DISC 2022)}, volume 246 of {\em Leibniz International
  Proceedings in Informatics (LIPIcs)}, pages 12:1--12:19, Dagstuhl, Germany,
  2022. Schloss Dagstuhl -- Leibniz-Zentrum f{\"u}r Informatik.
\newblock URL: \url{https://drops.dagstuhl.de/opus/volltexte/2022/17203}, \href
  {https://doi.org/10.4230/LIPIcs.DISC.2022.12}
  {\path{doi:10.4230/LIPIcs.DISC.2022.12}}.

\bibitem{DC22}
Manuel Bravo, Gregory Chockler, and Alexey Gotsman.
\newblock Making byzantine consensus live.
\newblock {\em Distrib. Comput.}, 35(6):503–532, sep 2022.
\newblock \href {https://doi.org/10.1007/s00446-022-00432-y}
  {\path{doi:10.1007/s00446-022-00432-y}}.

\bibitem{disc20}
Manuel Bravo, Gregory~V. Chockler, and Alexey Gotsman.
\newblock Making byzantine consensus live.
\newblock In Hagit Attiya, editor, {\em 34th International Symposium on
  Distributed Computing, {DISC} 2020, October 12-16, 2020, Virtual Conference},
  volume 179 of {\em LIPIcs}, pages 23:1--23:17. Schloss Dagstuhl -
  Leibniz-Zentrum f{\"{u}}r Informatik, 2020.
\newblock \href {https://doi.org/10.4230/LIPIcs.DISC.2020.23}
  {\path{doi:10.4230/LIPIcs.DISC.2020.23}}.

\bibitem{tendermint16}
Ethan Buchman.
\newblock Tendermint: Byzantine fault tolerance in the age of blockchains,
  2016.
\newblock URL: \url{http://hdl.handle.net/10214/9769}.

\bibitem{casperffg17}
Vitalik Buterin and Virgil Griffith.
\newblock Casper the friendly finality gadget.
\newblock {\em CoRR}, abs/1710.09437, 2017.
\newblock URL: \url{http://arxiv.org/abs/1710.09437}, \href
  {http://arxiv.org/abs/1710.09437} {\path{arXiv:1710.09437}}.

\bibitem{opodis22}
Pierre Civit, Muhammad~Ayaz Dzulfikar, Seth Gilbert, Vincent Gramoli, Rachid
  Guerraoui, Jovan Komatovic, and Manuel Vidigueira.
\newblock Byzantine consensus is {\(\Theta\)}(n{\({^2}\)}): The dolev-reischuk
  bound is tight even in partial synchrony!
\newblock In Christian Scheideler, editor, {\em 36th International Symposium on
  Distributed Computing, {DISC} 2022, October 25-27, 2022, Augusta, Georgia,
  {USA}}, volume 246 of {\em LIPIcs}, pages 14:1--14:21. Schloss Dagstuhl -
  Leibniz-Zentrum f{\"{u}}r Informatik, 2022.
\newblock \href {https://doi.org/10.4230/LIPIcs.DISC.2022.14}
  {\path{doi:10.4230/LIPIcs.DISC.2022.14}}.

\bibitem{cohen2021byzantine}
Shir Cohen, Idit Keidar, and Oded Naor.
\newblock Byzantine agreement with less communication: Recent advances.
\newblock {\em ACM SIGACT News}, 52(1):71--80, 2021.

\bibitem{dolev1985bounds}
Danny Dolev and R{\"u}diger Reischuk.
\newblock Bounds on information exchange for byzantine agreement.
\newblock {\em Journal of the ACM (JACM)}, 32(1):191--204, 1985.

\bibitem{DLS88}
Cynthia Dwork, Nancy Lynch, and Larry Stockmeyer.
\newblock Consensus in the presence of partial synchrony.
\newblock {\em J. ACM}, 35(2):288–323, apr 1988.
\newblock \href {https://doi.org/10.1145/42282.42283}
  {\path{doi:10.1145/42282.42283}}.

\bibitem{fischer1985impossibility}
Michael~J Fischer, Nancy~A Lynch, and Michael~S Paterson.
\newblock Impossibility of distributed consensus with one faulty process.
\newblock {\em Journal of the ACM (JACM)}, 32(2):374--382, 1985.

\bibitem{L22}
Andrew Lewis{-}Pye.
\newblock Quadratic worst-case message complexity for state machine replication
  in the partial synchrony model.
\newblock {\em CoRR}, abs/2201.01107, 2022.
\newblock URL: \url{https://arxiv.org/abs/2201.01107}, \href
  {http://arxiv.org/abs/2201.01107} {\path{arXiv:2201.01107}}.

\bibitem{Cogsworth21}
Oded Naor, Mathieu Baudet, Dahlia Malkhi, and Alexander Spiegelman.
\newblock Cogsworth: Byzantine {View} {Synchronization}.
\newblock {\em Cryptoeconomic Systems}, 1(2), oct 22 2021.
\newblock
  https://cryptoeconomicsystems.pubpub.org/pub/naor-cogsworth-synchronization.

\bibitem{NK20}
Oded Naor and Idit Keidar.
\newblock Expected linear round synchronization: The missing link for linear
  byzantine {SMR}.
\newblock In Hagit Attiya, editor, {\em 34th International Symposium on
  Distributed Computing, {DISC} 2020, October 12-16, 2020, Virtual Conference},
  volume 179 of {\em LIPIcs}, pages 26:1--26:17. Schloss Dagstuhl -
  Leibniz-Zentrum f{\"{u}}r Informatik, 2020.
\newblock \href {https://doi.org/10.4230/LIPIcs.DISC.2020.26}
  {\path{doi:10.4230/LIPIcs.DISC.2020.26}}.

\bibitem{shoup2000practical}
Victor Shoup.
\newblock Practical threshold signatures.
\newblock In {\em International Conference on the Theory and Applications of
  Cryptographic Techniques}, pages 207--220. Springer, 2000.

\bibitem{S21}
Alexander Spiegelman.
\newblock In search for an optimal authenticated byzantine agreement.
\newblock In Seth Gilbert, editor, {\em 35th International Symposium on
  Distributed Computing, {DISC} 2021, October 4-8, 2021, Freiburg, Germany
  (Virtual Conference)}, volume 209 of {\em LIPIcs}, pages 38:1--38:19. Schloss
  Dagstuhl - Leibniz-Zentrum f{\"{u}}r Informatik, 2021.
\newblock \href {https://doi.org/10.4230/LIPIcs.DISC.2021.38}
  {\path{doi:10.4230/LIPIcs.DISC.2021.38}}.

\bibitem{yin2019hotstuff}
Maofan Yin, Dahlia Malkhi, Michael~K Reiter, Guy~Golan Gueta, and Ittai
  Abraham.
\newblock Hotstuff: Bft consensus with linearity and responsiveness.
\newblock In {\em Proceedings of the 2019 ACM Symposium on Principles of
  Distributed Computing}, pages 347--356, 2019.

\end{thebibliography}

\section{Appendix} 
The following fairly nuanced observation regarding Cogsworth and Naor-Keidar came out of a private conversation with some of the authors of those papers. For the purpose of this discussion, we assume some familiarity with both of those papers \cite{Cogsworth21,NK20}.

 In both Cogsworth and Naor-Keidar,  there is a certain known bound, which we will call $\delta^{\ast}$ here, and which determines the delay before a correct processor gives up on one relay and tries the next. In those papers, $\delta^{\ast}$ is set to equal $2\Delta$, which gives the results described in Table 1. By playing with different values for $\delta^{\ast}$, however, it is possible to achieve a trade-off between worst-case latency and expected complexity. While the corresponding results do not significantly impact the narrative of this paper, they may be of interest to the dedicated reader.

In the following discussion, we will consider only the case of a static adversary. We assume that $\delta^{\ast}$ may be much smaller than $\Delta$, but is at most $2\Delta$. For Cogsworth, the possibility that $\delta^{\ast}$ may be much smaller than $\Delta$ now means that the worst-case latency is $ O(f^2\delta^{\ast} +f\Delta +\delta)$, while the worst-case complexity remains $ O(fn^2+n)$. Expected latency is $O(\Delta)$ (and $O(\delta)$ if $f=0$).

The issue with making $\delta^{\ast}$ very small is that, while this decreases the worst-case latency, it increases the \emph{expected} complexity in the case that $\delta >\delta^{\ast}$. In this case, the expected complexity becomes $O(n^2)$ even with benevolent faults (the standard version of Cogsworth has expected complexity $O(n)$ in the case of benevolent faults).  

For Naor-Keidar, if actual latency is at most $\delta^{\ast}$, then worst-case latency is $O(f^2\delta^{\ast} +f\Delta +\delta)$ and worst-case communication is $O(f^2n+n)$. Expected latency is $O(\Delta)$ (and $O(\delta)$ if $f=0$). The expected communication cost is $O(n)$.
If actual latency is $>\delta^{\ast}$, then the worst-case latency is $O(f^2\delta^{\ast} +f\Delta +\delta)$ and worst-case communication is $O(n^2f+n)$. Expected latency in this case is $O(\Delta)$ and expected communication is $O(n^2)$. So, again, there is a trade-off. Setting a small value of $\delta^{\ast}$ decreases the worst-case latency, but increases the expected communication in the case that $\delta>\delta^{\ast}$.
\end{document}